\newcolumntype{P}[1]{>{\centering\arraybackslash}p{#1}}
\newtheorem{theorem}{Theorem}
\newtheoremstyle{exampstyle}
{\topsep} 
{\topsep} 
{} 
{} 
{\bfseries} 
{.} 
{.5em} 
{} 
\theoremstyle{exampstyle} \newtheorem{example}{Example}
\theoremstyle{exampstyle} 
\theoremstyle{exampstyle} 
\theoremstyle{exampstyle} \newtheorem{definition}{Definition}
\theoremstyle{exampstyle} \newtheorem{lemma}{Lemma}
\theoremstyle{exampstyle} 
\begin{document}
%
\title{Counterexample-Guided Abstraction Refinement for POMDPs}
%
%
%

\author{Xiaobin~Zhang, 
        Bo~Wu, 
        and~Hai~Lin, 
\thanks{The authors were with Department of Electrical Engineering, University of Notre Dame, Notre Dame, IN 46556, USA.
	{\tt\small (xzhang11@nd.edu;~bwu3@nd.edu;~hlin1@nd.edu)}.}
}

%
%

\markboth{~}%
{Shell \MakeLowercase{\textit{et al.}}: Bare Demo of IEEEtran.cls for IEEE Journals}
%



\maketitle

\begin{abstract}
	Partially Observable Markov Decision Process (POMDP) is widely used to model probabilistic behavior for complex systems. Compared with MDPs, POMDP models a system more accurate but solving a POMDP generally takes exponential time in the size of its state space. This makes the formal verification and synthesis problems much more challenging for POMDPs, especially when multiple system components are involved. As a promising technique to reduce the verification complexity, the abstraction method tries to find an abstract system with a smaller state space but preserves enough properties for the verification purpose. While abstraction based verification has been explored extensively for MDPs, in this paper, we present the first result of POMDP abstraction and its refinement techniques. The main idea follows the counterexample-guided abstraction refinement (CEGAR) framework. Starting with a coarse guess for the POMDP abstraction, we iteratively use counterexamples from formal verification to refine the abstraction until the abstract system can be used to infer the verification result for the original POMDP. Our main contributions have two folds: 1) we propose a novel abstract system model for POMDP and a new simulation relation to capture the partial observability then prove the preservation on a fragment of Probabilistic Computation Tree Logic (PCTL); 2) to find a proper abstract system that can prove or disprove the satisfaction relation on the concrete POMDP, we develop a novel refinement algorithm. Our work leads to a sound and complete CEGAR framework for POMDP.
\end{abstract}

\begin{IEEEkeywords}
Partially observable Markov decision process, verification, abstraction methods, counterexample-guided refinement
\end{IEEEkeywords}

%
\IEEEpeerreviewmaketitle

\section{Introduction}
\label{intro}
Probabilistic behavior widely exists in practice for complex systems. To model such systems, Markov decision processes (MDPs) can be used to capture both nondeterminisms in system decision making and probabilistic behaviors in state transitions. As an extension of MDP, partially observable MDP (POMDP) considers another layer of uncertainties by assuming that the system states are not directly observable. This partial observability allows POMDPs to model systems more accurately but introduces more computational expenses when solving a POMDP\cite{ma2008modelling}. This is especially the case when the formal verification and synthesis problems are considered regarding system properties. 

Consider system properties that can be specified by temporal logics such as Probabilistic Computation Tree Logic (PCTL) \cite{rutten2004mathematical}. The model checking problem answers whether or not a system satisfies a given PCTL specification, for example, the probability to reach a set of bad states is less than 0.1\%. For MDPs, the model checking of PCTL formula can be solved in polynomial time in the size of state space \cite{baier2008principles}, while for POMDPs it generally requires exponential time \cite{madani1999undecidability,goldsmith1998complexity,chatterjee2016decidable}. When the system contains multiple sub-systems, the compositional verification can dramatically increase the computational cost for MDPs and can be even worse for POMDPs. This is known as the state space explosion problem.

As an important technique to conquer the state space explosion, abstraction is a method for abstracting the state space of the original (concrete) system and creating an abstract system that contains smaller state space but can still conservatively approximate the behavior of the concrete system \cite{hermanns2008probabilistic,chadha2010counterexample}. With a valid abstract system, the model checking problem can be reasoned on a smaller state space and its solution for the concrete system can be inferred correspondingly. While the abstraction methods have been proposed and applied to probabilistic systems such as MDPs in recent years  \cite{hermanns2008probabilistic,chadha2010counterexample,feng2010compositional,kwiatkowska2010assume,kwiatkowska2013compositional,he2016learning}, to the best of the authors' knowledge the abstraction methods have not been extensively explored for POMDPs where the partial observability brings new challenges. 

Motivated by the power of abstraction method in reducing model checking complexity, we consider here the abstraction method for POMDPs. To have the abstraction method well established for POMDPs, two key questions must be answered: 1) how to define the form of the abstract system; 2) how to find a proper abstract system. For non-probabilistic systems and MDPs, counterexample-guided abstraction refinement (CEGAR) framework is widely considered to answer the second question with different abstraction forms being defined \cite{clarke2000counterexample,hermanns2008probabilistic,chadha2010counterexample,komuravelli2012assume}. Generally, CEGAR starts with a very coarse abstraction and uses the counterexamples diagnosed from the abstract systems to iteratively refine the abstraction.     

In this article, we propose a sound and complete CEGAR framework for POMDPs, which allows the automatic reasoning on finding a proper abstraction for POMDPs. A safety fragment of PCTL (safe-PCTL)\cite{chadha2010counterexample} with finite horizon is considered as the system specification. As the abstraction for POMDPs, z-labeled 0/1-weighted automata ($0/1$-WA$^z$) are extended from $0/1$-weighted automata ($0/1$-WA) \cite{baier2008principles, he2016learning} by defining the observation labeling function for discrete states. Given a POMDP, its corresponding $0/1$-WA$^z$ can be defined with the observation labeling function representing the observation information. We then propose a simulation relation, \textit{safe simulation}, for $0/1$-WA$^z$ and prove that the safe simulation preserves the properties considered. This gives the foundation of the soundness of our CEGAR framework.

With $0/1$-WA$^z$ and safe simulation relation, we further address a novel CEGAR framework to find a proper $0/1$-WA$^z$ as the abstraction of POMDP. Initially, we start with the coarsest abstract system generated from a quotient construction and iteratively check the satisfaction relation of the given specification on the abstract system. Counterexamples from model checking on $0/1$-WA$^z$ are derived following algorithms in \cite{han2009counterexample} in the forms of a finite set of finite paths that violate the specification with large enough accumulative probability. Given these counterexamples, we verify whether or not these counterexamples are real witnesses for violation of the given specification on the concrete system. If not, we use these spurious counterexamples to refine the quotient construction and update the abstract system until satisfaction relation is proved to be true or real counterexample has been found for the concrete system.

\subsection{Our contributions}
The technical contributions are summarized in the order in which they appear in the article as follows:
\begin{itemize}
	\item We propose $0/1$-WA$^z$ as the form of abstract systems for POMDPs and give the conversion rules between $0/1$-WA$^z$ and POMDP. On $0/1$-WA$^z$, we design a new notion of simulation relation, safe simulation, which is extended from strong simulation relation for MDPs \cite{segala1995probabilistic}. With safe simulation relation capturing the property of the partial observability from POMDP, we prove the preservation of finite horizon safe-PCTL between the abstract system ($0/1$-WA$^z$) and the concrete system ($0/1$-WA$^z$ induced from the original POMDP).    
	
	\item We present a CEGAR framework to automatically find a proper abstract system to prove or disprove the satisfaction relation of the concrete system. We first define the quotient construction rules to generate candidate abstract systems based on a given state space partition. Starting with the coarsest partition, we iteratively refine the abstract systems using counterexamples returned from model checking on the abstract systems. Following our refinement algorithm, we prove the soundness and completeness of our CEGAR framework for POMDP. An example is given to show the effectiveness of our CEGAR framework to reduce the state space size of POMDP.
\end{itemize}

\subsection{Related work}

Abstraction methods and the refinement algorithms based on counterexamples have been extensively explored for probabilistic systems. In \cite{hermanns2008probabilistic}, probabilistic automata (PA) is considered. Compared with MDPs the transition distribution under an action in PA is not deterministic, i.e., there can be multi distributions defined for single state and action pair. With quotient construction to build abstract system, strong simulation relation \cite{segala1995probabilistic} is used to describe the properties shared between abstract and concrete system since strong simulation preserves the safety fragment of PCTL specification. Strong simulation relation also inspires us to derive safe simulation relation for POMDPs. To refine the abstract, CEGAR based on the interpolation method \cite{Henzinger:2004:AP:982962.964021} is applied with counterexamples being returned from model checking following \cite{han2007counterexamples}.  

Still using strong simulation relation to restrict abstraction, \cite{chadha2010counterexample} considers CEGAR on a special class of MDPs. Instead of an action set, the transition options for each state are determined by a set of distributions and the adversary will select a distribution to transit. From this point of view, each distribution is defined under a unique action. With system specification given by PCTL, different counterexample forms for MDPs are discussed. For safety PCTL, an MDP and simulation relation pair is used as the counterexample, then the spuriousness of the counterexample can be checked by standard simulation checking algorithm \cite{baier2000deciding}. Based on that a refinement will try to split the invalidating state. 

While quotient construction based on state space partition can reduce the size of state space, recent work in \cite{he2016learning} uses $0/1$-WA as the assumptions for MDPs and reduces the transition links and terminal nodes in multi-terminal binary decision diagrams (MTBDDs) representation of MDPs. The $L^*$ algorithm is then applied to learn an MTBDD as a proper assumption. But their method limits the choice of transition probabilities for $0/1$-WA to be either the transition probabilities from MDP or 1/0, which limits the form of abstractions.

Beside abstraction based assumption for the compositional verification of MDPs, there are also results on classical assumptions \cite{kwiatkowska2010assume,feng2010compositional,feng2011learning}. Compared with the abstraction based assumptions mentioned earlier, classical assumptions usually preserve only linear-time properties instead of branching-time properties between the assumption and the original system. This kind of assumptions can lead to a smaller state space but may not preserve rich enough properties and hard to guarantee the completeness.

For CEGAR framework of POMDP, our work is mainly inspired from \cite{hermanns2008probabilistic,chadha2010counterexample} where the safe-PCTL specification is considered and simulation based abstraction is discussed. 

\subsection{Outline of the article}
The rest of this article is organized as follows. In Section 2, we give necessary definition and notations. The $0/1$-WA$^z$ as the abstract system for POMDP and safe simulation relation are introduced in Section 3. Then the CEGAR framework is presented in Section 4 and an illustration example is given in Section 5. Finally, we conclude this article with future work in Section 6.

\section{Preliminaries}
\subsection{Probabilistic system models}

\begin{definition}
	\cite{rutten2004mathematical} An MDP is a tuple $\mathcal{M}=(S,\bar{s},A,T)$ where
	\begin{itemize}
		\item $S$ is a finite set of states;
		\item $\bar{s}\in S$ is the initial state;
		\item $A$ is a finite set of actions;
		\item $T:S\times A \times S\rightarrow [0,1]$ is a transition function
	\end{itemize}
\end{definition}
\noindent Here the probability of making a transition from state $s\in S$ to state $s'\in S$ under action $a\in A$ is given by $T(s,a,s')$. For MDP, it is required that $\sum_{s'\in S}T(s,a,s')=1$, $\forall s\in S, a\in A$. Under this requirement, the terminating states can be modeled by adding a self-loop with the probability $1$ \cite{rutten2004mathematical}. As a special case, DTMC is an MDP with only one action defined for each state $s\in S$ and the transition function can be reduced as $T:S\times S\rightarrow [0,1]$. To analyze the behavior of MDP with additional information, we can define a labeling function $L: S\rightarrow 2^{AP}$ that assigns each state $s\in S$ with a subset of atomic propositions $AP$. 

For MDPs, the states are assumed to be fully observable. As a generalization, POMDPs consider system states with partial observability. 
\begin{definition}
	A POMDP is a tuple $\mathcal{P}=\{\mathcal{M},Z,O\}$ where
	\begin{itemize}
		\item $\mathcal{M}$ is an MDP;
		\item $Z$ is a finite set of observations;		
		\item $O:S\times Z\rightarrow [0,1]$ is an observation function.
	\end{itemize}
\end{definition}
\noindent Instead of direct observability, POMDP assigns a probability distribution over the observation set $Z$ for every state $s\in S$, which is described by the observation function $O$. Here the probability of observing $z\in Z$ at state $s\in S$ is given by $O(s,z)$.

In this paper, we consider finite POMDPs, in which $S,~A,~Z$ and $AP$ are finite sets.

\subsection{Paths and adversaries}
In the MDP $\mathcal{M}=(S,\bar{s},A,T)$, a $path$ $\rho$ is a non-empty sequence of states and actions in the form 
\begin{align*}
\rho = s_0a_0s_1a_1s_2\ldots,
\end{align*}
where $s_0 = \bar{s}, ~s_i\in S,~a_i\in A$ and $T(s_i,a_i,s_{i+1})\geq0$ for all $i\geq0$ \cite{rutten2004mathematical}. Let $\rho(i)$ denote the $i$th state $s_i$ of a path $\rho$, $\rho^-_i$ denote the prefix ending in the $i$th state $s_i$ and $\rho^+_i$ denote the suffix starting from the $i$th state $s_i$. Let $|\rho|$ denote the length of $\rho$ which is the number of transitions. The set of paths in $\mathcal{M}$ is denoted as $Path_\mathcal{M}$ and its set of corresponding prefixes is denoted as $Pref_{\mathcal{M}}$. Let $Pr^{\rho}(i,j)$ ($i\leq j$) stand for the product of the transition probabilities from the $i$th state to the $j$th state ($i<j$) on path $\rho$. Thus   
\begin{align*}
Pr^{\rho}(i,j)=\prod^{j-1}_{k=i}T(s_k,a_{k},s_{k+1}),~i<j,
\end{align*}
and $Pr^{\rho}(i,j)=1$ if $i=j$. 

To synthesize about an MDP, an $adversary$ resolves the system nondeterminism given system executions. A pure adversary is a function $\sigma:Pref_{\mathcal{M}}\rightarrow A$, which maps every finite path of $\mathcal{M}$ onto an action in $A$. A randomized adversary is a function $\sigma:Pref_{\mathcal{M}}\rightarrow Dist(A)$, which maps every finite paths of $\mathcal{M}$ onto a distribution over action set $A$. Under an adversary function $\sigma$, the set of possible system paths is denoted as $Path^\sigma_\mathcal{M}$. While the pure adversary is a special case of the randomized adversary, in this paper we consider pure adversaries and we will discuss in the next subsection that pure adversaries are as powerful as randomized adversaries for the type of system properties we consider. 

Compared with MDPs, POMDPs do not have direct observability over the states. For a POMDP $\mathcal{P}=\{S,\bar{s},A,Z,T,O\}$, define the observation sequence of a path $\rho =s_0a_0s_1a_1s_2\ldots$ as a unique sequence $obs(\rho)=z_0a_0z_1a_1z_2\ldots$ where $z_i\in Z$ and $O(s_i,z_i)>0$ for all $i\geq0$. The observation sequence is also known as the history information. To synthesize about POMDP, the adversary is required to be \textit{observation-based}, where for all $\rho,\rho'\in Pref_{\mathcal{M}}$, $\sigma(\rho)=\sigma(\rho')$ if $obs(\rho)=obs(\rho')$.  

\subsection{Probabilistic computation tree logic}

To represent and synthesize the design requirements or control objectives, Probabilistic Computation Tree Logic (PCTL) \cite{rutten2004mathematical} is considered in this paper. As the probabilistic extension of the Computation Tree Logic (CTL) \cite{clarke1982design}, PCTL adds the probabilistic operator $P$, which is a quantitative extension of CTL's $A$ (always) and $E$ (exist) operators \cite{baier2008principles,kwiatkowska2007stochastic}.  

\begin{definition} \cite{rutten2004mathematical}
	The syntax of PCTL is defined as
	\begin{itemize}
		\item State formula $\phi::=true~|~\alpha~|~\neg \phi~|~\phi\wedge\phi~|P_{\bowtie p}[\psi]$,
		\item Path formula $\psi::=X\phi~|~\phi~\mathcal{U}^{\leq k}\phi~|~\phi~\mathcal{U}~\phi,$
	\end{itemize}
	where $\alpha\in AP$, $\bowtie\in\{\leq,<,\geq,>\}$, $p\in[0,1]$ and $k\in\mathbb{N}$.
\end{definition}
\noindent Here $\neg$ stands for "negation", $\wedge$ for "conjunction", $X$ for "next", $\mathcal{U}^{\leq k}$ for "bounded until" and $\mathcal{U}$ for "until". Specially, $P_{\bowtie p}[\psi]$ takes a path formula $\psi$ as its parameter and describes the probabilistic constraint. Note that a PCTL formula is always a state formula and path formulas only occur in $\mathcal{P}$ operator.

\begin{definition}\cite{rutten2004mathematical,zhang2015learning}
	For an labeled MDP $\mathcal{M}=(S,\bar{s},A,T,L)$, the satisfaction relation $\vDash$ for any states $s\in S$ is defined inductively as follows
	\begin{align*}
	s &\vDash true,~\forall s\in S;\\
	s & \vDash \alpha \Leftrightarrow \alpha\in L(s);\\
	s &\vDash \neg\phi \Leftrightarrow s\nvDash\phi;\\
	s &\vDash \phi_1\wedge\phi_2 \Leftrightarrow s\vDash\phi_1\wedge s\vDash\phi_2;\\
	s &\vDash P_{\bowtie p}[\psi] \Leftrightarrow Pr(\{\rho\in Path^{\sigma}_\mathcal{M}|~ \rho \vDash\psi \})\bowtie p,~\forall \sigma\in \Sigma_{\mathcal{M}},
	\end{align*}
	where $\Sigma_\mathcal{M}$ is the set of all adversaries and for any path $\rho\in Path_\mathcal{M}$
	\begin{align*}
	\rho &\vDash X\phi \Leftrightarrow \rho(1)\vDash \phi;\\
	\rho &\vDash \phi_1~\mathcal{U}^{\leq k}\phi_2 \Leftrightarrow \exists i\leq k, \rho(i)\vDash \phi_2\wedge\rho(j)\vDash\phi_1,\forall j<i;\\
	\rho &\vDash \phi_1~\mathcal{U}\phi_2 \Leftrightarrow \exists k\geq 0, \rho\vDash\phi_1~\mathcal{U}^{\leq k}\phi_2.
	\end{align*}
\end{definition}

Compared with MDPs, the adversaries for POMDPs must be observation-based. Therefore the PCTL satisfaction relation for $s \vDash P_{\bowtie p}[\psi]$ is limited to consider observation-based adversaries for POMDPs, while other satisfaction relations can be inherited from MDP cases. 

In this paper, we consider the safety fragment of PCTL (safe-PCTL) \cite{chadha2010counterexample} which is given in conjunction with the liveness fragment as follows.
\begin{itemize}
	\item $\phi_S:=true~|~\alpha~|~\phi_S\wedge\phi_S~|~\phi_S\vee\phi_S~|P_{\unlhd p}[ \phi_L~\mathcal{U}~\phi_L]$,
	\item $\phi_L:=true~|~\alpha~|~\phi_L\wedge\phi_L~|~\phi_L\vee\phi_L~|\neg P_{\unlhd p}[ \phi_L~\mathcal{U}~\phi_L]$,
\end{itemize}
where $\alpha\in AP$, $\unlhd\in\{\leq,<\}$, $p\in[0,1]$. If we restrict $\unlhd$ to be $\leq$ in the above grammar, we will get \emph{strict liveness} and \emph{weak safety} fragments of PCTL. While reasoning some logic properties for POMDPs is undecidable \cite{chatterjee2013survey}, we focus on safe-PCTL with finite-horizon which is decidable. 
And without losing generality, in the rest part of the article, we will mainly consider the safe-PCTL for bounded until $\phi=P_{\unlhd p}[\phi_1~\mathcal{U}^{\leq k}\phi_2]$ as the type of specifications to illustrate our framework. 

Generally, a randomized adversary is more powerful than a pure adversary, its special case. But for the PCTL fragment considered in this paper, the pure adversaries and randomized adversaries have the same power for MDPs and POMDPs in the sense that restricting the set of adversaries to pure strategies will not change the satisfaction relation of the PCTL fragment \cite{chatterjee2010randomness}. The intuitive justification of this claim is that if we are just interested in upper and lower bounds to the probability of some events to happen, any probabilistic combination of these events stays within the bounds. Moreover, deterministic adversaries	are sufficient to achieve the bounds \cite{chatterjee2010randomness}.

\subsection{Model checking and counterexample selection}
For MDPs, the probabilistic model checking of PCTL specification has been extensively studied \cite{rutten2004mathematical}. Depending on whether $\bowtie$ in the specification gives upper or lower bound, PCTL model checking of MDPs solves an optimization problem by computing either the minimum or maximum probability over all adversaries \cite{rutten2004mathematical}. Due to its full observability, MDP model checking can be solved generally in polynomial time in the size of the state space and the computational complexity has been discussed extensively in \cite{baier2008principles}. There also exist model checking software tools for MDPs, such as PRISM \cite{kwiatkowska2011prism} and storm \cite{dehnert2016probabilistic}.

As a generalization of MDPs to consider uncertainties in both transitions and observations, POMDPs model system more accurate but the model checking of POMDPs is more expensive \cite{ma2008modelling}. Generally, the solution for POMDP model checking is in exponential time in the size of the state space or even undecidable for lots of properties \cite{madani1999undecidability,goldsmith1998complexity,chatterjee2016decidable}. To solve the POMDP model checking problem for specification $\phi=P_{\unlhd p}[\phi_1~\mathcal{U}^{\leq k}\phi_2]$ considered in this article, one possible approach is modifying the transition structure of POMDP to make states $s\models\neg\phi_1$ and states $s\models\phi_2$ absorbing, and designing the reward scheme that assigns 0 to intermediate transitions and 1 to the final transitions on $s\models\phi_2$ when depth $k$ is reached \cite{sharan2014formal}. Then the model checking problems can be formulated as a classic POMDP optimization problem that can be solved by, for example, value iteration method \cite{zhang2001speeding,pineau2006anytime}. 

While model checking answers whether or not a given specification can be satisfied, an adversary will be returned as the witness if the specification is violated. As one step further, counterexample selection gives a particular system path or set of paths in the system that violate the specification as a detailed evidence. Such a path or set of paths is called counterexample. For non-probabilistic systems, finding counterexamples is done by finding a path that violates the specification. However, for the probabilistic systems, the probabilities of the paths also need to be considered. Given an adversary, the nondeterminism of MDPs and POMDPs can be solved which give the induced DTMCs. Then the counterexample selection for MDPs and POMDPs can be considered as for DTMCs since model checking problem will generate the witness adversary if the specification is violated. For DTMCs, in \cite{han2009counterexample} the counterexamples are considered to be a finite set of executions and finding the counterexamples that carry enough evidence is equivalent to solve shortest path problem on the graph generated from DTMCs. But for some properties, e.g., $\mathcal{P}_{<1}(true~\mathcal{U}~\phi)$, there may not exist a finite set of paths that witnesses the violation as shown in \cite{chadha2010counterexample}. While there are different counterexample forms proposed to reason different properties and cases, readers may refer to \cite{chadha2010counterexample} for a comprehensive summary. 

For the finite horizon safe-PCTL considered in this paper, counterexample can always be represented by a finite set of finite paths. In the case of $\leq$, this is clear since it belongs to the weak safety fragment of PCTL, which is a subset of properties considered in \cite{han2009counterexample}; in the case of $<$, the set of paths as the counterexample is finite since the specification is only on the finite horizon for a finite system. Therefore in this paper we apply the framework and algorithms from \cite{han2009counterexample} and COMICS tool \cite{jansen2012comics} is used as the software package to generate counterexamples.

\section{Abstraction and simulation relation}
\label{sec3}
For CEGAR in MDPs, different abstract system forms are proposed \cite{hermanns2008probabilistic,chadha2010counterexample,komuravelli2012assume} depending on the verification goals and considered specifications. For POMDP with partial observability, we present $0/1$-WA$^z$ as the abstract system form, which is extended from $0/1$-WA considered in \cite{he2016learning}. Based on $0/1$-WA$^z$, we further propose a simulation relation --- safe simulation that can preserve the finite horizon safe-PCTL specification between the abstract $0/1$-WA$^z$ system and the concrete POMDP system. These give the foundation of our sound and complete CEGAR framework for POMDP.

\subsection{z-labeled 0/1-weighted automata}

\begin{definition}\label{Def_zWA}
	A z-labeled 0/1-weighted automaton $0/1$-WA$^z$ $\mathcal{M}^z$ is a tuple $\{S,\bar{s},\\A,T,Z,L,L^z\}$,
	\begin{itemize}
		\item $S$ is a finite set of states;
		\item $\bar{s}$ is the initial state;
		\item $A$ is a finite set of actions;		
		\item $T:S\times A \times S \rightarrow [0,1]$ is a transition function;
		\item $Z$ is a finite set of observation labels;		
		\item $L:S\rightarrow 2^{AP}$ is a atomic proposition labeling function with $AP$ being a finite set of atomic propositions;
		\item $L^z:S\rightarrow Z$ is a z-labeling function.
	\end{itemize}
\end{definition}

With $L^z$ describing the observable information for each state, the system execution of $0/1$-WA$^z$ behaves in the same way of POMDP after we embed the transition and observation function in POMDP into transition function in $0/1$-WA$^z$. From this point of view, $0/1$-WA$^z$ can naturally be an abstraction form for POMDP. Due to the partial observability of states in $0/1$-WA$^z$, its adversaries must be also observation-based and the model checking for $0/1$-WA$^z$ can be inherited from POMDP model checking, which is further discussed in the appendix section. 

\noindent  \textbf{Remark}: Compared with $0/1$-WA model described in \cite{he2016learning}, $0/1$-WA$^z$ adds an observation set $Z$ and a z-labeling function $L^z$ to represent the observable information on each state in $S$. But the definition of cylinder set and $\sigma$ algebra in $0/1$-WA$^z$ can still be straightforwardly inherited from $0/1$-WA. 

Given a POMDP, we can define its corresponding $0/1$-WA$^z$, which can be seen as adding observation labels directly to its guided MDP proposed in our previous work \cite{zhang2015learning,zhang2016assume}.

\begin{definition}\label{Def_MDP+}
	\cite{zhang2015learning}	Given a POMDP $\mathcal{P}=\{S,\bar{s},A,Z,T,O\}$, its guided MDP $\mathcal{M}^{+}$ is a tuple $\{X^+,\bar{x}^+,A,T^+\}$,
	\begin{itemize}
		\item $X^+=\{x^+|x^+=[s,z],s\in S,~z\in Z\}\cup\{\bar{s}\}$ is a finite set of states;
		\item $\bar{x}^+=\bar{s}$ is the initial state;
		\item $A$ is a finite set of actions;		
		\item $T^+(\bar{s},a,[s',z']):=T(\bar{s},a,s')\cdot O(s',z')$;
		\item $T^+([s,z],a,[s',z']):=T(s,a,s')\cdot O(s',z')$.
	\end{itemize}
\end{definition}
Given a POMDP $\mathcal{P}=\{S,\bar{s},A,Z,T,O\}$, we define the corresponding $0/1$-WA$^z$ $\mathcal{M}^z:\{\mathcal{M}^+,Z^z,L^z\}$  where 
\begin{itemize}
	\item $\mathcal{M}^+$ is the guided MDP for $\mathcal{P}$;
	\item $Z^z=Z\bigcup\{init\}$; 
	\item $L^z(\bar{s})=init$ and $L^z([s,z])=z, s\in S, z\in Z$.
\end{itemize}
\noindent Here the special observation $init$ is used to label the initial state $\bar{s}$. While we assume the initial state of POMDP is known, if the initial condition given for the POMDP is a distribution over $S$, a dummy state can be added as the new initial state with transition links to all other $s\in S$ at the probability specified by the initial distribution \cite{zhang2015learning}.

\begin{example}\label{example1}
	Consider a POMDP $\mathcal{P}=\{S,\bar{s},A,Z,T,O\}$\footnote{This POMDP model is extended from the MDP model considered in \cite{chadha2010counterexample}}, where
	\begin{itemize}
		\item $S$ consists of $n+3$ states ($n$ is an even number and $n\geq 2$). $S=\{s_f,s_{even},s_{odd}\}\cup \{s_0,s_1,...,s_{n-1}\}$;
		\item $\bar{s} = s_0$;	
		\item $A=\{a,b\}$;
		\item $Z=\{z_f, z_{even}, z_{odd}\}$.
	\end{itemize}
	
	\begin{figure}
		\centering
		
		\begin{tikzpicture}[shorten >=1pt,node distance=3.5cm,on grid,auto, bend angle=20, thick,scale=0.7, every node/.style={transform shape}] 
		\node[state,initial] (q0) {$s_0$};
		\node[state] (q1) [right=of q0] {$s_1$};
		\node        (qd) [right=of q1] {$\cdots$}; 
		\node[state] (qn) [right=of qd] {$s_{n-1}$};    		
		\node[state] (qeven) [above=of q1] {$s_{even}$};
		
		\node[state] (qodd) [below=of q1] {$s_{odd}$};
		\node[state, fill=orange] (qf) [right=of qodd] {$s_f$};
		
		\path[->]
		(q0) edge node {$a:0.5$} (q1)  
		edge [out=245, in=240]  node [pos=0.3, sloped, below] {$a:0.25$} (qf)
		edge node [pos=0.5, sloped, above] {$a:0.25$} (qeven)
		edge [loop above, pos = 0.5,sloped, above] node {$b:1$} ()
		(q1) edge  node {$a:0.5$} (qd) 
		edge [loop above, pos = 0.5,sloped, above] node {$b:1$} ()
		edge node[pos=0.5, sloped, above] {$a:0.25$} (qf)
		edge node [pos=0.5, sloped, above]{$a:0.25$} (qodd)
		(qd) edge node{$a:0.5$} (qn)
		(qodd) edge [loop left, pos = 0.7,sloped, above] node {$a:0.5$} () 
		edge node[pos=0.5, sloped, below] {$a:0.5$} (qf)
		(qn) edge  node [pos=0.5, sloped, above]{$a:0.25$} (qf) 
		edge  node [pos=0.5, sloped, above]{$a:0.25$} (qodd) 
		edge [loop above, pos = 0.5,sloped, above] node {$a:0.5,b:1$} ()
		; 
		\end{tikzpicture} 
		
		\caption{The POMDP model $\mathcal{P}$ in Example \ref{example1}}
		\label{fig:P}
	\end{figure}
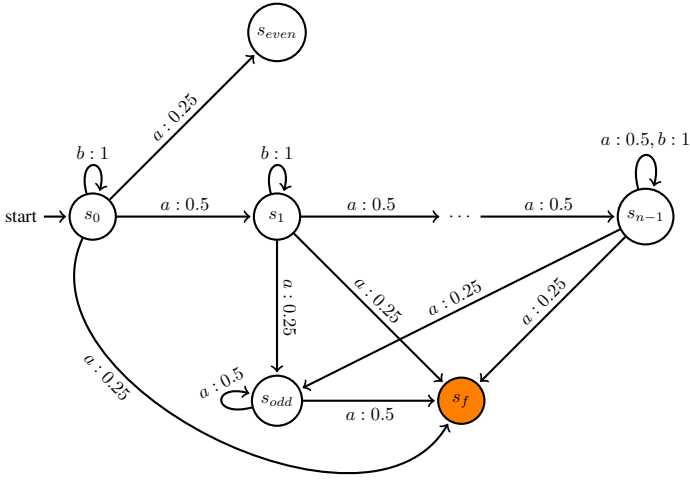
	
	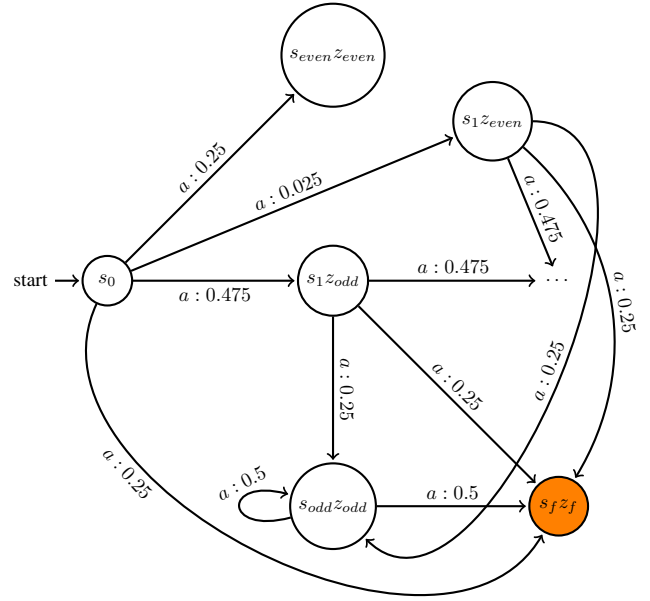
\begin{figure}
		\centering
		\begin{tikzpicture}[shorten >=1pt,node distance=4cm,on grid,auto, bend angle=40, thick,scale=0.75, every node/.style={transform shape}] 
		\node[state,initial] (q0) {$s_0$};
		\node[state] (q1) [right=of q0] {$s_1z_{odd}$};
		\node[state] (q2) [above right=of q1] {$s_1z_{even}$};
		\node        (qd) [right=of q1] {$\cdots$}; 
		\node[state] (qeven) [above=of q1] {$s_{even}z_{even}$};
		\node[state] (qodd) [below=of q1] {$s_{odd}z_{odd}$};
		\node[state, fill=orange] (qf) [right=of qodd] {$s_fz_f$};					
		\path[->]
		(q0) edge node [pos=0.5, sloped, below]{$a:0.475$} (q1)  
		edge node [pos=0.5, sloped, above]{$a:0.025$} (q2)
		edge [out=245, in=240]  node [pos=0.3, sloped, below] {$a:0.25$} (qf)
		edge node [pos=0.5, sloped, above] {$a:0.25$} (qeven)
		(q1) edge  node {$a:0.475$} (qd) 
		edge node[pos=0.5, sloped, above] {$a:0.25$} (qf)
		edge node [pos=0.5, sloped, above]{$a:0.25$} (qodd)
		(q2) edge  node [pos=0.5, sloped, above] {$a:0.475$} (qd) 
		edge [bend left] node[pos=0.5, sloped, above] {$a:0.25$} (qf)
		edge [out=0, in=-45] node [pos=0.5, sloped, above]{$a:0.25$} (qodd)
		(qodd) edge [loop left, pos = 0.7,sloped, above] node {$a:0.5$} () 
		edge node[pos=0.5, sloped, above] {$a:0.5$} (qf)				 
		; 
		
		\end{tikzpicture} 
		\caption{The corresponding $0/1$-WA$^z$ $\mathcal{M}^z$ for $\mathcal{P}$ in Example \ref{example1}}
		\label{fig:Mz}
	\end{figure}
	
	\begin{table}[h]
		\centering
		\caption{Observation matrix}
		\label{table:O}
		\begin{tabular}{c|ccc}
			$O(s,z)$	& $z_f$ & $z_{even}$ & $z_{odd}$ \\ \hline
			$s_f$         & 1     &            &           \\
			$s_{even}$    &       & 1          &           \\
			$s_{odd}$     &       &            & 1         \\
			$s_{i}$: even &       & 0.95       & 0.05      \\
			$s_{i}$: odd  &       & 0.05       & 0.95     
		\end{tabular}
	\end{table}

	The transition probability under both actions are given in Fig. \ref{fig:P}, where 
	\begin{align*}
	& T(s_i,b,s_i)=1,~i\in[0,n-1],\\
	& T(s_i,a,s_{i+1})=0.5,~i\in[0,n-2],\\
	& T(s_{n-1},a,s_{n-1})=0.5,\\
	& T(s_i,a,s_f)=0.25,~i\in[0,n-1],\\
	& T(s_i,a,s_{even})=0.25, \text{ for } i\text{: even and }i\in[0,n-1],\\
	& T(s_i,a,s_{odd})=0.25, \text{ for } i\text{: odd and }i\in[0,n-1],\\	
	& T(s_{odd},a,s_f)=0.5,\\
	& T(s_{odd},a,s_{odd})=0.5.
	\end{align*}
	The observation matrix is shown as follows in Table \ref{table:O}. Among $S$, the state $s_f$ represents a failure state with label $fail$ and is colored by orange in Fig. \ref{fig:P}. No other states has proposition labels. 
	
	For POMDP $\mathcal{P}$, the corresponding $0/1$-WA$^z$ $\mathcal{M}^z$ is shown in Fig. \ref{fig:Mz}.	
\end{example}

Based on the definition of $\mathcal{M}^z$, the paths from POMDP $\mathcal{P}$ and paths from $\mathcal{M}^z$ build a one-to-one mapping under a same observation-based adversary. This makes POMDP and its corresponding $0/1$-WA$^z$ equivalent and convertible for formal verification purposes. Notice that the parallel production of POMDP's state space and observation space during the generation of $0/1$-WA$^z$ $\mathcal{M}^z$ is mainly for the usage of theoretical proof of our CEGAR framework. Though the cross production generates a larger state space in $0/1$-WA$^z$, the random processes for state transitions and observation selections in POMDP have been encoded in the state transitions in $0/1$-WA$^z$. Therefore it will not introduce lots of computational expenses in our refinement algorithm introduced later.

\subsection{Simulation relation for $0/1$-WA$^z$}

With $0/1$-WA$^z$ as the possible form of the abstract systems for POMDP, we need to define the property preservation relation between the abstract system and the concrete system. For MDPs, simulation relations are widely used and discussed \cite{segala1995probabilistic, chadha2010counterexample}. Based on the strong simulation relation \cite{segala1995probabilistic} that preserves safe-PCTL for MDPs, we propose the safe simulation relation for $0/1$-WA$^z$ to preserve finite horizon safe-PCTL specifications. 

Consider two $0/1$-WA$^z$ $\mathcal{M}^z_1=\{S_1,\bar{s}_1,A,T_1,Z,L_1,L^z_1\}$ and $\mathcal{M}^z_2=\{S_2,\bar{s}_2,A,\\T_{2},Z,L_2,L^z_2\}$. For $a\in A$, $s_1\in S_1$ and $s_2\in S_2$, let $s_1\xrightarrow{a}\mu_1$ and $s_2\xrightarrow{a}\mu_2$ with $\mu_i(s')=T_i(s_i,a,s'),~s'\in S_i,~i\in\{1,2\}$. Let $Supp(\mu):=\{s|~\mu(s)>0\}$. Note that we still call $\mu_i$ a probability distribution for convenience but it is non-stochastic since its total probability mass can be larger than 1. Let $R\subseteq S_1\times S_2$ denote a binary relation between the state spaces of $\mathcal{M}^z_1$ and $\mathcal{M}^z_2$.
\begin{definition}\label{def:mu_sim}
	$\mu_1\sqsubseteq_{R}\mu_2$ if and only if there is a weight function $w:S_1\times S_2\rightarrow[0,1]$ such that\\
	1. $\mu_1(s_1)=\sum_{s_2\in S_2}w(s_1,s_2)$, $\forall$ $s_1\in S_1$,\\
	2. $\mu_2(s_2)\geq\sum_{s_1\in S_1}w(s_1,s_2)$,  $\forall$ $s_2\in S_2$,\\
	3. $w(s_1,s_2)=0$ if $L^z_1(s_1)\neq L^z_2(s_2)$ or $L_1(s_1)\neq L_2(s_2)$,\\
	4. $w(s_1,s_2)>0$ implies $s_1R s_2$, $\forall$ $s_1\in S_1$, $s_2\in S_2$.
\end{definition}	

\begin{definition}\label{def:safesim}
	$R$ is a safe simulation relation between two $0/1$-WA$^z$ $\mathcal{M}^z_1$ and $\mathcal{M}^z_2$ if and only if for every $s_1Rs_2$ and $s_1\xrightarrow{a}\mu_1$, there exists a $\mu_2$ with $s_2\xrightarrow{a}\mu_2$ and $\mu_1\sqsubseteq_{R}\mu_2$. 
	
	For $s_1\in S_1$ and $s_2\in S_2$, $s_2$ safely simulates $s_1$, denoted $s_1\preceq s_2$, if and only if there exists a safe simulation $T$ such that $s_1Ts_2$. 
	
	$\mathcal{M}^z_2$ safely simulated $\mathcal{M}^z_1$, also denoted $\mathcal{M}^z_1\preceq\mathcal{M}^z_2$, if and only if $\bar{s}_1\preceq\bar{s}_2$.
\end{definition}

\begin{lemma}\label{lemma:mu}
	$\mu_1\sqsubseteq_{R}\mu_2$ implies, $\mu_1(S)\leq\mu_2(R(S))$, for every $S\subseteq Supp(\mu_1)$.
\end{lemma}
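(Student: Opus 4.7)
The plan is to unpack Definition~\ref{def:mu_sim} and let the weight function $w$ do the work. By hypothesis $\mu_1\sqsubseteq_R\mu_2$, so fix such a $w:S_1\times S_2\to[0,1]$. Recall the standing notational convention that for a set $T$ of states, $\mu(T)=\sum_{s\in T}\mu(s)$, and that $R(S)=\{s_2\in S_2:\exists s_1\in S,\ s_1\,R\,s_2\}$.

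First, I would use condition~1 of Definition~\ref{def:mu_sim} to expand $\mu_1(S)$ as a double sum:
\begin{equation*}
\mu_1(S)=\sum_{s_1\in S}\mu_1(s_1)=\sum_{s_1\in S}\sum_{s_2\in S_2}w(s_1,s_2).
\end{equation*}
Next, I would invoke condition~4: whenever $w(s_1,s_2)>0$ we have $s_1\,R\,s_2$, so for $s_1\in S$ only $s_2\in R(S)$ contribute. Thus the inner sum may be restricted to $R(S)$, and after swapping the order of summation I obtain
\begin{equation*}
\mu_1(S)=\sum_{s_2\in R(S)}\sum_{s_1\in S}w(s_1,s_2).
\end{equation*}

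Finally, I would apply condition~2 pointwise in $s_2\in R(S)$: since $w$ takes values in $[0,1]$ and $S\subseteq S_1$,
\begin{equation*}
\sum_{s_1\in S}w(s_1,s_2)\;\leq\;\sum_{s_1\in S_1}w(s_1,s_2)\;\leq\;\mu_2(s_2),
\end{equation*}
and summing over $s_2\in R(S)$ yields $\mu_1(S)\leq\sum_{s_2\in R(S)}\mu_2(s_2)=\mu_2(R(S))$, which is exactly the claim.

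I do not anticipate any real obstacle here; the lemma is essentially a Fubini-plus-monotonicity argument once $w$ is in hand. The only small subtlety is making sure that restricting the inner sum to $R(S)$ uses condition~4 correctly (i.e.\ the implication runs from $w>0$ to $s_1\,R\,s_2$, not the converse), and that condition~3 about the labels $L^z$ and $L$ plays no role in this particular lemma—it will only matter later when the weight function is used to propagate the safe simulation through transitions.
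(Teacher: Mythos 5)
Your proposal is correct and follows essentially the same route as the paper's proof: expand $\mu_1(S)$ via condition~1, restrict the inner sum to $R(S)$ via condition~4, swap the order of summation, and bound by $\mu_2(R(S))$ via condition~2. The paper simply compresses these steps into a two-line display, so your version is just a more explicit write-up of the identical argument.
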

\begin{proof}
	Let $w$ be the associated weight function, then 
	\begin{align*}
	\mu_1(S) & = \sum_{s\in S}\sum_{t\in R(S)}w(s,t) \\
	& =\sum_{t\in R(S)}\sum_{s\in S}w(s,t) \leq \mu_2(R(S)).
	\end{align*}

\end{proof}

Lemma \ref{lemma:mu} states that the safe simulation relation for probability distributions enlarges the probability masses over the support. In the next theorem, we prove that the safe simulation preserves of the finite horizon safe-PCTL specifications, where its enlargement effect for the sets of paths is shown.    

\begin{theorem}\label{theorem:sim}
	Let $\phi$ be a safe-PCTL specification with finite horizon. For a POMDP $\mathcal{P}$ and its corresponding $0/1$-WA$^z$ $\mathcal{M}^z_1$,  if $\mathcal{M}^z_1\preceq\mathcal{M}^z_2$ with $\mathcal{M}^z_2$ being $0/1$-WA$^z$, then $\mathcal{M}^z_2 \models \phi $ implies $\mathcal{P}\models \phi$.
\end{theorem}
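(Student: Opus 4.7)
The plan is to prove the preservation by structural induction on the safe-PCTL formula $\phi$, exploiting the fact that the abstract distributions dominate the concrete ones via the weight function in Definition \ref{def:mu_sim}. The base cases ($true$ and atomic propositions $\alpha$) follow immediately from condition 3 of Definition \ref{def:mu_sim}, which forces simulating states to share both $L$-labels and $L^z$-labels. Boolean combinations ($\wedge$, $\vee$) are routine. The liveness fragment $\phi_L$ containing $\neg P_{\unlhd p}[\cdot]$ reduces to the probabilistic case because negating an upper bound only requires the opposite direction of the inequality to be established on the same quantity.

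The main case is $\phi = P_{\unlhd p}[\phi_1 \mathcal{U}^{\leq k} \phi_2]$. Here I would show the following key lemma: for every observation-based adversary $\sigma_1$ of $\mathcal{M}^z_1$, there exists an observation-based adversary $\sigma_2$ of $\mathcal{M}^z_2$ such that
\begin{equation*}
\Pr\nolimits^{\sigma_1}_{\mathcal{M}^z_1}\bigl(\{\rho \mid \rho \models \phi_1 \mathcal{U}^{\leq k} \phi_2\}\bigr) \;\leq\; \Pr\nolimits^{\sigma_2}_{\mathcal{M}^z_2}\bigl(\{\rho \mid \rho \models \phi_1 \mathcal{U}^{\leq k} \phi_2\}\bigr).
\end{equation*}
Combined with the induction hypothesis (which turns satisfaction of $\phi_1, \phi_2$ at simulating states into equivalences), this yields that every $\mathcal{M}^z_1$-probability is bounded by some $\mathcal{M}^z_2$-probability, so $\mathcal{M}^z_2 \models P_{\unlhd p}[\cdot]$ entails $\mathcal{M}^z_1 \models P_{\unlhd p}[\cdot]$, and hence $\mathcal{P} \models P_{\unlhd p}[\cdot]$ since $\mathcal{P}$ and $\mathcal{M}^z_1$ are equivalent under the one-to-one path correspondence described after Example \ref{example1}.

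To construct $\sigma_2$ and prove the inequality, I would proceed by induction on the horizon $k$, lifting Lemma \ref{lemma:mu} from one-step distributions to path measures. The step $k = 0$ is immediate from condition 3 applied to the initial states $\bar{s}_1 R \bar{s}_2$. For the inductive step, at each reached pair $(s_1, s_2) \in R$ I use the weight function $w$ guaranteed by $\mu_1 \sqsubseteq_R \mu_2$ (where $\mu_1$ is produced by $\sigma_1(a)$) to define the action taken by $\sigma_2$ after the corresponding history in $\mathcal{M}^z_2$, namely the same $a$; I then split the mass one level deeper using $w$ and apply the induction hypothesis on the $(k-1)$-suffixes. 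Summing the weight contributions along paths whose states all satisfy $\phi_1$ up to a $\phi_2$-state gives the required inequality, with Lemma \ref{lemma:mu} absorbing the possible excess mass on the $\mathcal{M}^z_2$ side.

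The main obstacle is ensuring that the constructed $\sigma_2$ is observation-based, since safe simulation matches states pointwise while POMDP adversaries must act uniformly on observation-equivalent histories. The key observation is that condition 3 of Definition \ref{def:mu_sim} forces $L^z_1(s_1) = L^z_2(s_2)$ whenever $w(s_1,s_2) > 0$, so the simulation $R$ preserves observations step-by-step, and any two histories in $\mathcal{M}^z_1$ with identical observation sequences are lifted to histories in $\mathcal{M}^z_2$ with identical observation sequences. This lets me define $\sigma_2$ as a function of observation histories alone, inheriting its observation-basedness from $\sigma_1$. A secondary technical point is that $0/1$-WA$^z$ distributions need not be stochastic; however, this only helps us, since $\mu_2$ carrying total mass $\geq \mu_1$ means no probability is lost on the abstract side when bounding the until-probability from above.
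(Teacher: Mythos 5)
Your proposal is correct and follows essentially the same route as the paper's proof: both reduce the probabilistic case to showing that the set of violating paths in $\mathcal{M}^z_1$ under an observation-based adversary corresponds to a path set in $\mathcal{M}^z_2$ of at least the same probability mass, established by induction along the horizon using the weight function of Definition \ref{def:mu_sim} and Lemma \ref{lemma:mu}. The paper simply reuses the same adversary on both systems (justified implicitly by the matching $L^z$-labels), whereas you spell out the adversary transfer and its observation-basedness explicitly; that, and your explicit structural induction over the formula, are elaborations of steps the paper treats as immediate rather than a different argument.
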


\begin{proof}
	For finite horizon safe-PCTL specifications without probabilistic operator $P$, this theorem holds trivially since the initial states of $\mathcal{M}^z_2$ and $\mathcal{P}$ have the same atomic proposition labels.  
	
	Consider the specifications containing probabilistic operator with constraint, $\phi=P_{\unlhd p}[~]$. Assume $\mathcal{M}^z_2 \models \phi $ but $\mathcal{P} \not\models \phi$. For $\mathcal{M}^z_1$, there must exist an observation-based adversary, which generates a witness DTMC $\mathcal{M}_1$ that violates the specification $\phi$. We want to show, under the same adversary, the corresponding weighted DTMC $\mathcal{M}_2$ for $\mathcal{M}^z_2$ also violates $\phi$. Thus $\mathcal{M}_2$ can serve as the witness of $\mathcal{M}^z_2$'s violation and show $\mathcal{M}^z_2\not\models \phi$. Then by contradiction the theorem can be proved. 
	
	Let $R$ be the safe simulation relation between $\mathcal{M}^z_1$ and $\mathcal{M}^z_2$. Since $\mathcal{M}^z_1\not\models\phi$, there exists a finite set of finite paths that satisfy $\phi$ and their accumulated probability masses $\ntrianglelefteq p$.
	
	To show the contradiction, we first prove that for a finite set of paths with length $n$, $Path_s=\{\rho_s|\rho_s=s_0...s_n\}$ in $\mathcal{M}^z_1$, the corresponding set of paths $Path_t=\{\rho_t|\rho_t=t_0...t_n, \text{~s.t. }\exists \rho_s=s_0...s_n,\in Path_s,~s_i Rt_i\}$ in $\mathcal{M}^z_2$ satisfies $Pr(Path_t)\geq Pr(Path_s)$. For $i=0,\ldots,n$, let $s_i\in S(i)$ where $S(i)=\{s|s=\rho_s(i),\rho_s\in Path_s\}$. For paths in $Path_s$, let $\mu^s_i$ denote the (non-stochastic) probability distribution over $S(i)$ at step $i$. Correspondingly we could define $T_i$, $\mu^t_i$ for $Path_t$. By the definition of $Path_t$, we have $T(i) = R(S(i)))$. Clearly, $Pr(Path_s)=\mu^s_n(S(n))$ and $Pr(Path_t)=\mu^t_n(T(n))$. Now we are proving $\mu^s_j\sqsubseteq_R \mu^t_{j}$ for $j=0,\ldots,n$ by induction, which will lead to the proof of Theorem \ref{theorem:sim} following Lemma \ref{lemma:mu}.
	
	For $j=0$, $Pr(Path_t)\geq Pr(Path_s)$ holds trivially. For $j=1$, we have $\mu^s_1 \sqsubseteq_R \mu^t_1$ since $\mu^s_0$ and $\mu^t_0$ are generated directly from $\mu_{s_0}$ and $\mu_{t_0}$, respectively. 
	
	Assume $\mu^s_i \sqsubseteq_R \mu^t_i$ for $j=i$. Then for $j=i+1$,
	\begin{align*}
	\mu^s_{i+1}(s_y)&=\sum_{s_x\in S(i)}\mu^s_{i}(s_x)\cdot \mu_{s_x}(s_y),~ \forall s_y\in S(i+1),\\
	\mu^t_{i+1}(t_y)&=\sum_{t_x\in T(i)}\mu^t_{i}(t_x)\cdot \mu_{t_x}(t_y),~ \forall t_y\in T(i+1).
	\end{align*} 
	Since $\mu^s_i \sqsubseteq_R \mu^t_i$ , there exists a weight function $w$, such that	 
	\begin{align*}
	\mu^s_{i+1}(s_y)&=\sum_{s_x\in S(i)}\left[\sum_{t_1\in R(s_x)}w(s_x,t_1)\cdot \mu_{s_x}(s_y)\right],\\
	\mu^t_{i+1}(t_y)&\geq\sum_{t_x\in T(i)}\left[\sum_{s_1\in R(t_x)}w(s_1,t_x)\cdot \mu_{t_x}(t_y)\right].
	\end{align*} 	 
	Since $w(s,t)>0$ if and only if $s R t$, 
	
	\begin{align*}
	\mu^s_{i+1}(s_y)&=\sum_{s_x}\sum_{t_x}\left[w(s_x,t_x)\cdot \mu_{s_x}(s_y)\right],\\
	\mu^t_{i+1}(t_y)&\geq\sum_{s_x}\sum_{t_x}[w(s_x,t_x)\cdot \mu_{t_x}(t_y)]
	\end{align*} 
	with $s_x R t_x$. Because $s_x R t_x$, we have $\mu_{s_x}\sqsubseteq_R \mu_{t_x}$ with a weight function $w_x$. Thus 	 
	\begin{align*}
	\mu^s_{i+1}(s_y)&=\sum_{s_x}\sum_{t_x}\left[w(s_x,t_x)\cdot \sum_{t_2\in R(s_y)} w_x(s_y,t_2)\right]\\
	&=\sum_{t_2\in R(s_y)}\sum_{s_x}\sum_{t_x}w(s_x,t_x)\cdot w_x(s_y,t_2),\\
	\mu^t_{i+1}(t_y)&\geq\sum_{s_x}\sum_{t_x}\left[w(s_x,t_x)\cdot  \sum_{s_2\in R(t_y)} w_x(s_2,t_y)\right]\\
	&=\sum_{s_2\in R(t_y)}\sum_{s_x}\sum_{t_x}w(s_x,t_x)\cdot w_x(s_2,t_y).
	\end{align*} 	 
	Based on Definition \ref{def:mu_sim}, $\mu^s_{i+1}\sqsubseteq_R \mu^t_{i+1}$. By induction, we have proved $\mu^s_j\sqsubseteq_R \mu^t_{j}$ for $j=0,\ldots,n$. Thus we have $\mu^s_k(S(k))\leq\mu^t_k(T(k))$ following Lemma \ref{lemma:mu} given $T(k)=R(S(k))$. This also proves $Pr(Path_s) \leq Pr(Path_t)$. Because $n$ stands for an arbitrary length, by applying this proof to paths with different lengths in the counterexample path set from $\mathcal{M}^z_1$, we know that there always exists a finite set of paths in $\mathcal{M}^z_2$ with a larger accumulative probability that violates the specification. Then this corresponding set of paths is the witness of $\mathcal{M}^z_2$ violating $\phi$, which contradicts to our initial assumption. This concludes our proof for Theorem \ref{theorem:sim}. 	 

\end{proof}

With Theorem \ref{theorem:sim}, we have shown that safe simulation preserves finite horizon safe-PCTL specification $\phi$ between POMDP and the possible $0/1$-WA$^z$ as the abstract system. In next section, we present our CEGAR framework to find a proper $0/1$-WA$^z$ as the abstract system.

\section{Counterexample guided abstraction refinement}

In order to find a $0/1$-WA$^z$ $\mathcal{M}^z$ that can safely simulate the corresponding $0/1$-WA$^z$ of a POMDP $\mathcal{P}$ and capture enough properties from $\mathcal{P}$ to prove or disprove the satisfaction relation, we develop a novel CEGAR scheme. Since the model checking of specifications without probabilistic operators can be trivial, we focus on finite-horizon safe-PCTL with probabilistic operator $P$ and probability threshold.

\subsection{Quotient construction for $0/1$-WA$^z$}
\begin{figure}
	\centering
	\includegraphics[height=40mm]{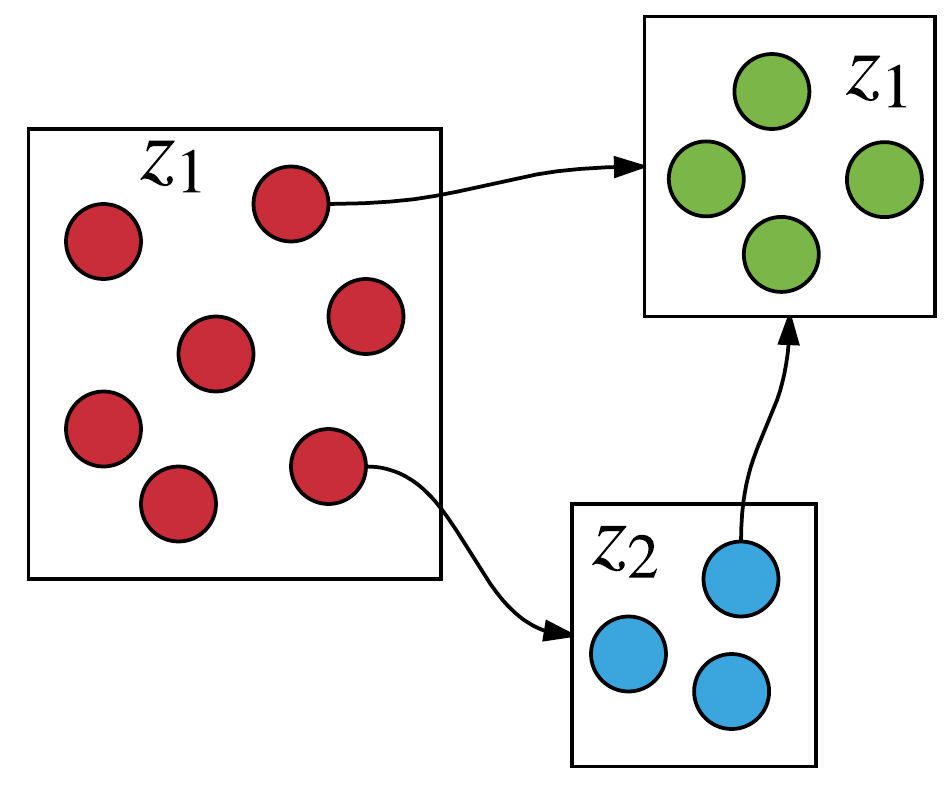}				
	\caption{Illustration of the quotient construction for  $0/1$-WA$^z$: the circles stand for the concrete states $s\in S$ and the rectangles stand for the abstract states generated by a given partition $\varPi$. Different colors stand for different atomic propositions. Based on Definition \ref{def:quotient}, the concrete states can belong to a same abstract state only if they have the same observation label.}				
\end{figure}

We first define the quotient $0/1$-WA$^z$ by partitioning the state space of the corresponding $0/1$-WA$^z$ $\mathcal{M}^z=\{S,\bar{s},A,T,Z,L,L^z\}$ generated for POMDP $\mathcal{P}$. This process is called quotient construction. Let $\varPi$ denote a partition of $S$ and $[s]_\varPi$ denote the equivalence class (abstract state) containing $s$ (concrete state). Here $\varPi$ can be viewed as a set of abstracted states and $[s]_\varPi\in\varPi$. Comparing with the quotient construction presented in the non-probabilistic case and MDP, we require $\forall s_1,s_2\in [s]_\varPi,L^z(s_1)=L^z(s_2),~L(s_1)=L(s_2)$. Such a partition $\varPi$ is called a consistent partition.
\begin{definition}\label{def:quotient}
	Given a $0/1$-WA$^z$ $\mathcal{M}^z=\{S,\bar{s},A,T,Z,L,L^z\}$ and a consistent partition $\varPi$, define the quotient $0/1$-WA$^z$ $\mathcal{M}^z/\varPi=\{\varPi,[\bar{s}]_\varPi,A,T_\varPi,Z,L^z_\varPi,L_\Pi\}$ where 
	\begin{align*}
	&T_\varPi([s]_\varPi,a,[s]'_\varPi)=\max_{s_i\in[s]_{\varPi}}\sum_{s_j\in[s]'_{\varPi}} T(s_i,a,s_j),\\ 
	&L^z_{\varPi}([s]_\varPi)=L^z(s),\\
	&L_{\varPi}([s]_\varPi)=L(s).
	\end{align*} 
\end{definition} 
\noindent Note that based on the requirement of the consistent partition, $[\bar{s}]_\varPi=\bar{s}$. By Definition \ref{def:safesim},  It follows straight forwardly that $\mathcal{M}^z/\varPi\preceq \mathcal{M}^z$ with safe simulation relation $R=\{(s,c)|s\in c,c\in\varPi\}$.

\subsection{CEGAR}
In order to find a quotient $0/1$-WA$^z$ $\mathcal{M}^z/\varPi$ that either shows 
$\mathcal{M}^z/\varPi\models\phi$ or provides the counterexample to show $\mathcal{M}^z\not\models\phi$, we follow a CEGAR approach. 

Initially, we construct the quotient $0/1$-WA$^z$ $\mathcal{M}^z/\varPi_0$ based on the coarsest partition $\varPi_0$ that groups any states with the same observation and atomic proposition labels together in $\mathcal{M}^z$. In each iteration $i$, if the model checking of the specification $\phi$ on $\mathcal{M}^z/\varPi_i$ returns yes, the CEGAR terminates with a proper quotient $0/1$-WA$^z$ $\mathcal{M}^z/\varPi_i$ as the abstract system that $\mathcal{M}^z/\varPi_i\models\phi$, which further implies $\mathcal{P}\models\phi$. Otherwise, we can get the counterexample as a finite set of finite paths $Path_{CE}$ and their accumulated probability mass $\ntrianglelefteq p$. Given the counterexample, we check whether this counterexample is spurious or not where the realizable probability of the counterexample is checked. If this counterexample is not spurious, the CEGAR terminates with the real counterexample that witnesses the violation of the given specification. Otherwise, we will use this spurious counterexample to refine the state space partition to get a finer state space partition. This process keeps going until the counterexample (CE) checking returns "No" in the sense that the previously found counterexample has been removed. After that, with the finer state space partition, we re-construct the quotient $0/1$-WA$^z$ for the iteration $i+1$ and do model checking again. The overview of our CEGAR approach is shown in Fig. \ref{fig:flowchart} and we illustrate our algorithms step-by-step as follows.

\begin{figure}
	\centering
	\includegraphics[height=75mm]{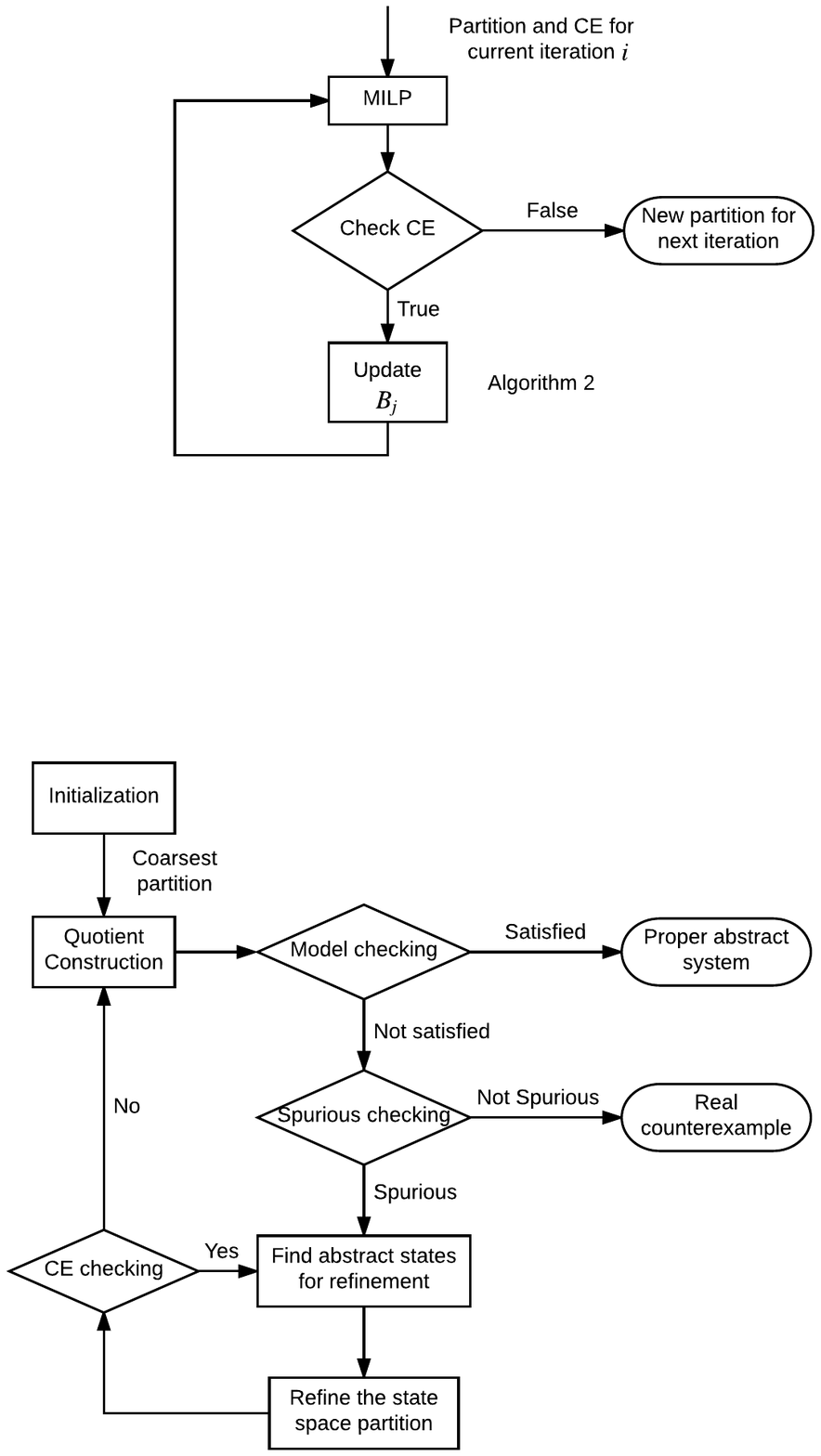}				
	\caption{The overview of the CEGAR approach for POMDP}		
	\label{fig:flowchart}	
\end{figure}

\subsubsection{Check spuriousness of counterexample}
Given the counterexample, we calculate the achievable probability in its realization in $\mathcal{M}^z$. For a abstract path $\rho_\varPi=c_0a_0...c_n\in Path_{CE}$, we define its realization in $\mathcal{M}^z$ as a set of concrete state paths denoted by $\pi(\rho_\varPi)=\{\rho|\rho=s_0a_0...s_n, s_j\in c_j, j=0,...,n\}$. If the accumulative probability mass of all realizations for $Path_{CE}$  $\ntrianglelefteq p$, we have found a real counterexample showing $\mathcal{P}\not\models\phi$. Otherwise, $Path_{CE}$ is spurious and introduced by a too coarse quotient partition, in which case we need to refine the state space partition and get a finer abstract system. The algorithm for spuriousness checking of a given counterexample is summarized in Algorithm \ref{alg:spurious}.

\begin{algorithm}
	\DontPrintSemicolon 
	\KwData{Counterexample as a set of paths $Path_{CE}$ and the probability threshold $p$ for the specification $\phi$}
	\KwResult{$true$ if the counterexample is spurious.}
	
	$sum\gets 0$\;
	\For{$\rho_\varPi=c_0a_0...c_n\in Path_{CE}$} {
		
		\For{$\rho\in \pi(\rho_\varPi)$}{$sum= sum + Pr^\rho(0,|\rho|)$\;
			\If{$sum\ntrianglelefteq p $}{\Return $false$}}
	}
	
	\Return $true$
	\caption{Check spuriousness of the counterexample}
	\label{alg:spurious}
\end{algorithm}

\subsubsection{Find the set of abstract states for refinement}

Given a spurious counterexample $Path_{CE}$ as a set of abstract state paths in $\mathcal{M}^z/\varPi$, the refinement algorithm first finds a set of abstract states needed to be refined then analyzes the splitting policy for certain abstract states to reduce the spuriousness of the counterexample. Based on the splitting policy, the refinement algorithm generates a new partition $\varPi$ for next iteration.

\begin{figure}
	\centering
	\includegraphics[height=40mm]{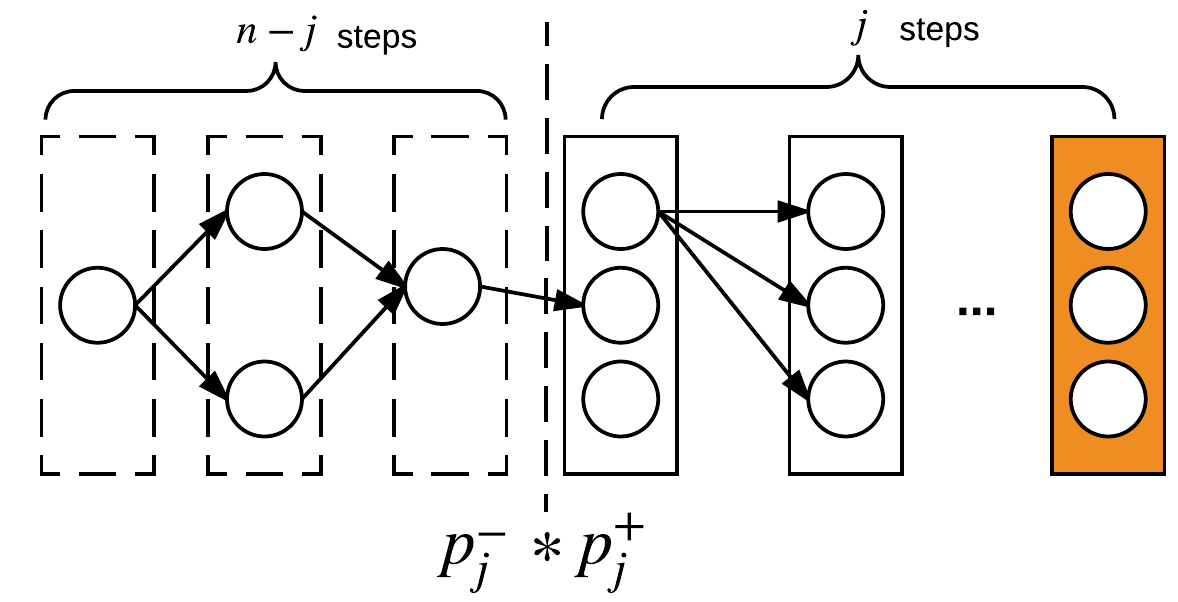}				
	\caption{Illustration of finding $B_j$. Starting from the end of a counterexample path, $j$ will separate this path into two parts. For the accumulative probability, we use the transition probability from the abstract state path to get $p^+_j$ for the last $j$ steps and the transition probability from the concrete state paths to get $p^-_j$ for the first $n-j$ steps.}		
	\label{fig:findBj}	
\end{figure}

To find a set of abstract states needed to be refined, we initialize $j=0$ and get a set of abstract states $B_j$ that contains the ($pivot$)th state for each $\rho_\varPi\in Path_{CE}$ with $pivot=\max(|\rho_\varPi|-j,0)$: 
\begin{align*}
B_j= \{c\in\varPi|c=\rho_\varPi(pivot),\rho_\varPi\in Path_{CE}\}.
\end{align*}
For each $\rho_\varPi\in Path_{CE}$, we calculate 
\begin{align*}
p^+_j= Pr^{\rho_\varPi}(pivot,|\rho_\varPi|).
\end{align*}
For each $\rho\in\pi(\rho_\varPi)$, we calculate 
\begin{align*}
p^-_j= Pr^{\rho}(0,pivot).
\end{align*}
Given $p^+_j$ and $p^-_j$ for each $\rho_\varPi\in Path_{CE}$ and $\rho\in\pi(\rho_\varPi)$, we can get the accumulated summation $SP_j=\sum_{\rho_\varPi\in Path_{CE}}\sum_{\rho\in\pi(\rho_\varPi)}p^-_j\cdot p^+_j$ and check whether $SP_j\unlhd p$ or not. If yes, we increase $j$ by 1 and go to the next iteration. If not, $B_j$ contains the abstract state needed to be refined. Notice that $SP_j$ is monotonic increasing and when $j$ reaches the maximum path length in $Path_{CE}$, $SP_j=\sum_{\rho_\varPi\in Path_{CE}}Pr^{\rho_{\varPi}}(0,|\rho_{\varPi}|)\ntrianglelefteq p$ because $Path_{CE}$ is a counterexample. Also because $Path_{CE}$ is spurious, $SP_1\trianglelefteq p$. These two facts guarantee the termination of the algorithm for some $j>0$. If $B_j$ is returned as a set of abstract states needed to refine, $B_j$ must have at least one abstract state that contains more than one concrete states; otherwise $sum$ will be the same for $j$ and $j-1$ which will make the algorithm end in iteration $j-1$. A direct implementation of the algorithm for this part is summarized in Algorithm \ref{alg:find}. 

\noindent\textbf{Remark}: While the calculation of $p^-_j$ requires the projection of an abstract state path to a set of concrete state paths, we only need to do the calculation once when $j=0$. In the initial iteration with $j=0$, we will calculate the probability of reaching each concrete state following the projection paths of the counterexample and these values can be saved to memory. In later iterations, the projection paths are still the same so we just need to load $p^-_j$ from memory. And since the counterexample paths have already solved the nondeterminism, solving $p^-_j$ can be done very efficiently by, for example, dynamic programming. 

\begin{algorithm}
	\DontPrintSemicolon 
	\KwData{spurious counterexample as a set of paths $Path_{CE}$ and the probability threshold $p$ for the specification $\phi$}
	\KwResult{a set of abstracted states $B_j$.}
	$j\gets -1$\;	
	
	\Do{$sum\unlhd p$}{
		$j\gets j+1$\;
		$sum\gets 0$\;
		
		$B_j\gets \{c\in\varPi|c=\rho_\varPi(\max(|\rho_\varPi|-j,0)),\rho_\varPi\in Path_{CE}\} $\;
		\For{$\rho_\varPi\in Path_{CE}$} {	
			$pivot\gets\max(|\rho_\varPi|-j,0)$\;	
			
			$p^+_j\gets Pr^{\rho_\varPi}(pivot,|\rho_\varPi|)$\;
			\For{$\rho\in \pi(\rho_\varPi)$}{
				$p^-_j\gets Pr^{\rho}(0,pivot) $\;
				$sum= sum+p^-_j\cdot p^+_j$\;	
				\If{$sum\ntrianglelefteq p $}{\Return $B_j$}}}

	} 
	\caption{Find the set of abstract states for refinement}
	\label{alg:find}
\end{algorithm}

\subsubsection{Refine the state space partition}

After finding $B_j$, we have a set of abstract states showing that the state space partition $\varPi_i$ is too coarse. We want to find a new partition to reduce the spuriousness of the counterexample. 

Assume the spurious counterexample $Path_{CE}$ contains only one abstract path $\rho_\varPi$ with the corresponding abstract state in $B_j$ having more than one concrete states. Let $c=\rho_\varPi(pivot)$ and $c_{next}=\rho_\varPi(pivot+1)$ with $pivot=\max(|\rho_\varPi|-j,0)$ and $a\in A$ the selected action for the transition between $c$ and $c_{next}$. We want to find the split policy that splits the equivalent class $c$ into two abstract states to reduce the value of $SP_j$. Based on the quotient construction rules for the abstract system, we could first get the array of transition probabilities from $s\in c$ to $c_{next}$ ($\sum_{s'\in c_{next}}T(s,a,s')$), then sort this array in descending order to get $G^+$. Here the $n$th element of $G^+$ stands for $\sum_{s'\in c_{next}}T(s[n],a,s')$, $n=1,2,...,|c|$. If we split $c$ into two parts that the first abstract state $c_1$ contains concrete states $\{s[1],...,s[n-1]\}$ and the second abstract state $c_2$ contains concrete states $\{s[n],...,s[|c|]\}$, then $T_\varPi(c_1,a,c_{next})=G^+[1]$ and $T_\varPi(c_2,a,c_{next})=G^+[n]$. Let $G^-$ represent an array of transition probabilities that its $n$th element is $G^-[n]=\sum_{\rho\in \pi(\rho_\varPi),\rho(pivot)=s[n]}Pr^\rho(0,pivot)$. Since we want to reduce the value of $SP_j$, we can go through the possible split policy by selecting the separation point $n$ ($n=2,3,...,|c|$) such that
\begin{align*}
\arg_{n}\min\sum^{n-1}_{d=1}G^-[d]\cdot G^+[1]+\sum^{|c|}_{d=n}G^-[d]\cdot G^+[n],
\end{align*}  
with the minimization goal describing the potential new $SP_j$. 

\begin{algorithm}
	\DontPrintSemicolon 
	\KwData{spurious counterexample as a set of paths $Path_{CE}$ in $\mathcal{M}^z/\varPi$ and a set of abstracted states $B_j$}
	\KwResult{new partition $\varPi$}	
	$Path \gets \{\rho_\varPi|\rho_\varPi\in Path_{CE},|\rho_\varPi(pivot)|>1 \text{~with~}pivot=\max(|\rho_\varPi|-j,0)\}$\;	
	$M= |Path|$\;
	\For{$m=1,...,M$}{	
		$\rho^m_\varPi\gets$ the $m$th path in $Path$\;
		$pivot_m=\max(|\rho^m_\varPi|-j,0)$\;
		$c^m=\rho^m_\varPi(pivot_m)$, $c^m_{next}=\rho^m_\varPi(pivot_m+1)$\;
		$a_m\gets$ the selected action between $c^m$ and $c^m_{next}$ on $\rho^m_\varPi$\;  
		generate $G^+_m$ and $G^-_m$\;
		$p^+_{j|\rho^m_\varPi}=Pr^{\rho^m_\varPi}(pivot_m,|\rho^m_\varPi|)$;\;
		\For{$n=2,...,|c^m|$}{
			
			$r_{mn}=\left(\sum^{n-1}_{d=1}G^{-}_m[d]\cdot G^+_m[1]+\sum^{|c|}_{d=n}G^-_m[d]\cdot G^+_m[n]\right)\cdot \frac{p^+_{j|\rho^m_\varPi}}{T_{\varPi}(c^m,a_m,c^m_{next})}$\;	
			
		}	
	}
	$m,n\gets \arg_{m,n}\min r_{mn}$\;
	update $\varPi$ by splitting $c_m$ with the separation point $n$\;
	\Return $\varPi$	
	\caption{Refine the state space partition}
	\label{alg:refine}
\end{algorithm}

If we have multiple abstract paths like $\rho_\varPi$, in this case we can go through each path, find the split policy based on that path, then compare and select the split policy that gives the best reduction for the potential new $SP_j$. Given a set of abstract state paths $Path_{CE}$ as the spurious counterexample and $B_j$, we first get a subset of abstract state paths $Path=\{\rho_\varPi|\rho_\varPi\in Path_{CE},|\rho_\varPi(pivot)|>1 \text{~with~}pivot=\max(|\rho_\varPi|-j,0)\}$. Basically we extract a set of abstract state paths with the corresponding abstract state in $B_j$ containing more than one concrete state. Let $|Path|=M$. For each $\rho^m_\varPi\in Path,~m=1,...,M$, we have $pivot_m$, $c^m$, $c^m_{next}$, $a_m$, $G^+_m$ and $G^-_m$, respectively. If the $m$th path is selected with separation point $n$, the potential new $SP_j$ can be captured by 
\begin{align*}
r_{mn}=\left(\sum^{n-1}_{d=1}G^{-}_m[d]\cdot G^+_m[1]+\sum^{|c|}_{d=n}G^-_m[d]\cdot G^+_m[n]\right)\\
\cdot \frac{p^+_{j|\rho^m_\varPi}}{T_{\varPi}(c^m,a_m,c^m_{next})},
\end{align*}
and the selected $m$ and $n$ should minimize $r_{mn}$. The algorithm for this part is summarized in Algorithm \ref{alg:refine}.

With the $m$ and $n$ minimizing the potential new $SP_j$, we have a split policy that splits the select abstract state into two. Under the new partition rule, we check the new counterexample paths. If these abstract state paths do not have enough accumulative probability of violation, we will move to the next iteration $i+1$ with the new state space partition $\varPi_{i+1}$; otherwise, we will apply Algorithm \ref{alg:find} to update $B_j$ then run a new round of refinement.

\begin{theorem}\label{theorem:cegar}
	The CEGAR framework is sound and complete. 
\end{theorem}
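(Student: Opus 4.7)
The plan is to split the proof of Theorem \ref{theorem:cegar} into two independent claims: soundness (whenever the CEGAR loop terminates, its verdict is correct) and completeness (the loop terminates in finitely many iterations).

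For soundness I would treat the two termination branches separately. If the abstract model checker reports $\mathcal{M}^z/\varPi_i \models \phi$, I invoke the observation noted after Definition \ref{def:quotient} that $R=\{(s,[s]_\varPi)\mid s\in S\}$ is a safe simulation with $\mathcal{M}^z \preceq \mathcal{M}^z/\varPi_i$ on the corresponding $0/1$-WA$^z$ $\mathcal{M}^z$ of the POMDP $\mathcal{P}$. Theorem \ref{theorem:sim} then yields $\mathcal{P}\models\phi$ immediately. For the other branch, Algorithm \ref{alg:spurious} declares a counterexample non-spurious exactly when the set of concrete realizations $\bigcup_{\rho_\varPi\in Path_{CE}}\pi(\rho_\varPi)$ in $\mathcal{M}^z$ already accumulates violation-probability $\ntrianglelefteq p$ under the witness observation-based adversary produced by the abstract model checker. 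Using the one-to-one path correspondence between $\mathcal{P}$ and its $\mathcal{M}^z$ under any observation-based adversary (established in Section \ref{sec3}), this lifts to a genuine observation-based adversary on $\mathcal{P}$ witnessing $\mathcal{P}\not\models\phi$.

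For completeness I would use a strict-refinement monovariant. Let the measure of a consistent partition $\varPi$ be its cardinality $|\varPi|$; this lives in the finite range $\{1,\dots,|S|\}$. I need two facts. First, each call to Algorithm \ref{alg:refine} strictly refines the current partition: the text following Algorithm \ref{alg:find} guarantees that the returned $B_j$ contains at least one abstract state $c$ with $|c|\geq 2$, and Algorithm \ref{alg:refine}'s split index $n\in\{2,\dots,|c|\}$ partitions such a $c$ into two nonempty pieces, so $|\varPi_{i+1}|>|\varPi_i|$. Second, the inner loop in which Algorithms \ref{alg:find} and \ref{alg:refine} are repeatedly invoked on the current counterexample cannot loop forever, because every pass performs a strict refinement; in the worst case the partition reaches the discrete one, at which point $\mathcal{M}^z/\varPi=\mathcal{M}^z$ exactly, every counterexample coincides with its own set of concrete realizations, and Algorithm \ref{alg:spurious} is forced to return \emph{false}. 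Chaining these, the whole CEGAR procedure terminates after at most $|S|-|\varPi_0|$ refinement steps.

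The main obstacle, to my mind, is not the outer monovariant but making the inner refinement sub-loop rigorous: one has to argue that as long as the counterexample remains spurious, Algorithm \ref{alg:find} necessarily returns a $B_j$ containing a non-singleton class, and the split chosen by Algorithm \ref{alg:refine} is always well-defined (i.e., the minimizer over $(m,n)$ exists and corresponds to an actual split of a non-singleton class). This reduces to the monotonicity observation about $SP_j$ sketched just before Algorithm \ref{alg:find}: $SP_1\trianglelefteq p$ (spuriousness) and $SP_{|\rho_{max}|}\ntrianglelefteq p$ (counterexample), so the first $j$ violating $SP_j\unlhd p$ is well-defined, and at that $j$ a class in $B_j$ must be non-singleton because otherwise $p^-_j$ coincides with the concrete prefix probability and $SP_j$ would collapse to $SP_{j-1}$. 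Once this technical point is nailed down, soundness and completeness follow from the two paragraphs above.
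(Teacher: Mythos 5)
Your proposal is correct and follows essentially the same route as the paper's own (very terse) proof: soundness from the quotient construction being a safe simulation combined with Theorem \ref{theorem:sim}, and completeness from the fact that each refinement strictly refines the partition, which is bounded by the finite concrete state space. You additionally spell out details the paper leaves implicit --- the non-spurious-counterexample branch of soundness, the explicit monovariant $|\varPi|$, and the termination of the inner refinement sub-loop via the $SP_j$ monotonicity argument --- and your simulation direction $\mathcal{M}^z \preceq \mathcal{M}^z/\varPi_i$ is the one actually required by Theorem \ref{theorem:sim} and consistent with the relation $R=\{(s,c)\mid s\in c\}$ given after Definition \ref{def:quotient}.
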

\begin{proof}
	Soundness: we initialize the coarsest assumption as a quotient $0/1$-WA$^z$ $\mathcal{M}^z/\varPi$. For every iteration $i\in \mathbf{N}$, the new generated abstract system $\mathcal{M}^z/\varPi_i$ is also a quotient $0/1$-WA$^z$. Follow the definition of quotient $0/1$-WA$^z$ and Theorem \ref{theorem:sim}, the soundness can be concluded. 
	
	Completeness: Since the newly generated abstract system is guaranteed to be finer than the older one in the previous iteration and the original concrete system $\mathcal{M}^z$ is the finest abstract system for itself, the convergence of the CEGAR is guaranteed. This shows in the worst case we find the original concrete system as its assumption, which concludes the completeness. 
	
\end{proof}

\section{Example}

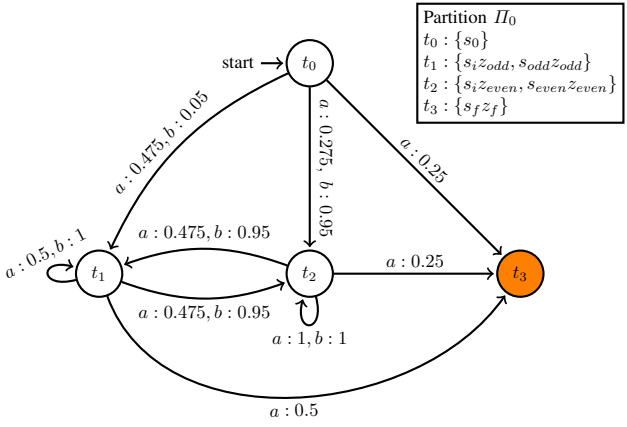
\begin{figure}
	\centering
	\begin{tikzpicture}[shorten >=1pt,node distance=4cm,on grid,auto, bend angle=20, thick,scale=0.7, every node/.style={transform shape}] 
	\node[state,initial] (t0) {$t_0$};
	\node[state] (t2) [below=of t0] {$t_2$};
	\node[state] (t1) [left=of t2] {$t_1$};
	\node[state, fill=orange] (t3) [right=of t2] {$t_3$};
	\node (q2) [draw,right=of t0,align=left] {	
		Partition $\varPi_0$\\		$t_0:\{s_0\}$\\$t_1:\{s_iz_{odd},s_{odd}z_{odd}\}$\\$t_2:\{s_iz_{even},s_{even}z_{even}\}$\\$t_3:\{s_fz_f\}$};  	
	\path[->]
	(t0) edge  [bend right] node [pos=0.55, sloped, above] {$a:0.475,b:0.05$} (t1) 			
	edge node [pos=0.5, sloped, above] {$a:0.275,~b:0.95$} (t2)
	edge node [pos=0.5, sloped, above] {$a:0.25$} (t3)
	(t1) edge [bend right] node [pos=0.5, sloped, below] {$a:0.475,b:0.95$} (t2)
	edge  [out=-70, in=235]node [pos=0.5, sloped, below] {$a:0.5$} (t3)
	edge [loop left, pos = 0.7,sloped, above] node {$a:0.5,b:1$} () 
	(t2) edge [bend right] node [pos=0.5, sloped, above] {$a:0.475,b:0.95$} (t1)
	edge node [pos=0.5, sloped, above] {$a:0.25$} (t3)
	edge [loop below, pos = 0.5,sloped, below] node {$a:1,b:1$} () 
	; 
	
	\end{tikzpicture} 
	\caption{The $0/1$-WA$^z$ $\mathcal{M}^z_0$ for $\mathcal{P}$ }
	\label{fig:M0}
\end{figure}

\begin{figure}
	\centering
	\begin{tikzpicture}[shorten >=1pt,node distance=4cm,on grid,auto, bend angle=20, thick,scale=0.7, every node/.style={transform shape}] 
	\node[state,initial, accepting] (t0) {};			  	
	\path[->]
	(t0)  edge [loop right,  align = center] node {$z_*,a$} () 
	; 
	\end{tikzpicture}
	\caption{The adversary that witnesses $\mathcal{M}_0\not\models\phi$ }
	\label{fig:adv}
\end{figure}
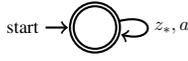

\begin{figure}
	\centering
	\begin{tikzpicture}[shorten >=1pt,node distance=4cm,on grid,auto, bend angle=20, thick,scale=0.7, every node/.style={transform shape}] 
	\node[state,initial] (t0) {$t_0$};
	\node[state] (t2) [below=of t0] {$t_2$};
	\node[state] (t1) [left=of t2] {$t_1$};
	\node[state] (t4) [below right=of t1] {$t_4$};
	\node[state, fill=orange] (t3) [right=of t2] {$t_3$};
	\node (q2) [draw,right=of t0,align=left] {	
		Partition $\varPi_1$\\		$t_0:\{s_0\}$\\$t_1:\{s_iz_{odd}\}$\\$t_2:\{s_iz_{even},s_{even}z_{even}\}$\\$t_3:\{s_fz_f\}$\\$t_4:\{s_{odd}z_{odd}\}$};  	
	\path[->]
	(t0) edge  [bend right] node [pos=0.55, sloped, above] {$a:0.475,b:0.05$} (t1) 			
	edge node [pos=0.5, sloped, above] {$a:0.275, b:0.95$} (t2)
	edge node [pos=0.5, sloped, above] {$a:0.25$} (t3)
	(t1) edge [bend right] node [pos=0.5, sloped, below] {$a:0.475,b:0.95$} (t2)
	edge [bend right] node [pos=0.5, sloped, below] {$a:0.25$} (t4)
	edge  [out=-55, in=235]node [pos=0.5, sloped, above] {$a:0.25$} (t3)
	edge [loop left, pos = 0.7,sloped, above] node {$a:0.475,b:0.95$} () 
	(t2) edge [bend right] node [pos=0.5, sloped, above] {$a:0.475,b:0.95$} (t1)
	edge node [pos=0.5, sloped, above] {$a:0.25$} (t3)
	edge [loop below, pos = 0.5,sloped, below] node {$a:1,b:1$} () 
	(t4) edge [out=-45, in=245] node [pos=0.5, sloped, below] {$a:0.5$} (t3)
	edge [loop below, pos = 0.5,sloped, below] node {$a:0.5,b:1$} () 
	; 
	
	\end{tikzpicture} 
	\caption{The $0/1$-WA$^z$ $\mathcal{M}^z_1$ }
	\label{fig:M1}
\end{figure}	
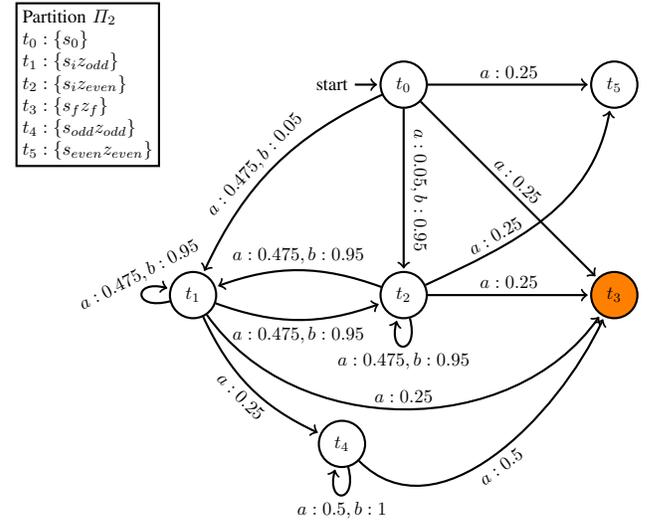
\begin{figure}
	\centering
	\begin{tikzpicture}[shorten >=1pt,node distance=4cm,on grid,auto, bend angle=20, thick,scale=0.7, every node/.style={transform shape}] 
	\node[state,initial] (t0) {$t_0$};
	\node[state] (t5) [right=of t0] {$t_5$};
	\node[state] (t2) [below=of t0] {$t_2$};
	\node[state] (t1) [left=of t2] {$t_1$};
	\node[state] (t4) [below right=of t1] {$t_4$};
	\node[state, fill=orange] (t3) [right=of t2] {$t_3$};
	\node (q2) [draw,left=of t0,xshift = -2cm, align=left] {	
		Partition $\varPi_2$\\		$t_0:\{s_0\}$\\$t_1:\{s_iz_{odd}\}$\\$t_2:\{s_iz_{even}\}$\\$t_3:\{s_fz_f\}$\\$t_4:\{s_{odd}z_{odd}\}$\\$t_5:\{s_{even}z_{even}\}$};  	
	\path[->]
	(t0) edge  [bend right] node [pos=0.55, sloped, above] {$a:0.475,b:0.05$} (t1) 			
	edge node [pos=0.5, sloped, above] {$a:0.05, b:0.95$} (t2)
	edge node [pos=0.5, sloped, above] {$a:0.25$} (t3)
	edge node [pos=0.5, sloped, above] {$a:0.25$} (t5)
	
	(t1) edge [bend right] node [pos=0.5, sloped, below] {$a:0.475,b:0.95$} (t2)
	edge [bend right] node [pos=0.5, sloped, below] {$a:0.25$} (t4)
	edge  [out=-55, in=235]node [pos=0.5, sloped, above] {$a:0.25$} (t3)
	edge [loop left, pos = 0.7,sloped, above] node {$a:0.475,b:0.95$} () 
	(t2) edge [bend right] node [pos=0.5, sloped, above] {$a:0.475,b:0.95$} (t1)
	edge node [pos=0.5, sloped, above] {$a:0.25$} (t3)
	edge[out=25, in=260] node [pos=0.3, sloped, above] {$a:0.25$} (t5)
	edge [loop below, pos = 0.5,sloped, below] node {$a:0.475,b:0.95$} () 
	(t4) edge [out=-45, in=245] node [pos=0.5, sloped, below] {$a:0.5$} (t3)
	edge [loop below, pos = 0.5,sloped, below] node {$a:0.5,b:1$} () 
	
	; 
	
	\end{tikzpicture} 
	\caption{The $0/1$-WA$^z$ $\mathcal{M}_2$}
	\label{fig:M2}
\end{figure}	

For POMDP system $\mathcal{P}$ described in Example \ref{example1}, consider the bounded until specification 	
$\phi=P_{\leq 0.45} (true~ \mathcal{U}^{\leq n} fail)$ and we want to check whether $\mathcal{P}\models \phi$. Let $n=20$. Based on the corresponding $0/1$-WA$^z$ $\mathcal{M}^z$ of $\mathcal{P}$, we get the initial coarsest abstraction for $\mathcal{P}$ by partitioning the state space of $\mathcal{M}^z$ into four equivalence classes $\{s_0\},\{s_iz_{odd},s_{odd}z_{odd}\},\{s_iz_{even},s_{even}z_{even}\}$ and $\{s_fz_f\}$ with $i\in[0,n)$. The resulting $0/1$-WA$^z$ $\mathcal{M}^z_0$ is shown in Fig. \ref{fig:M0} with four abstract states. Here $\mathcal{M}^z_0$ does not satisfy $\phi$ with the witness adversary shown in Fig. \ref{fig:adv}, which basically choose action $a$ for whatever history sequences. The counterexample is returned as a set of two paths: $\{t_0\rightarrow t_3, t_0\rightarrow t_1\rightarrow t_3\}$ with the accumulative probability of $0.4875$, which is larger than required $0.45$. It turns out that this CE is spurious. By checking the abstracted states backward in CE, we find $B_1=\{t_0,t_1\}$ that includes the over abstracted state and requires further refinement. The selected splitting policy splits $s_{odd}z_{odd}$ out of $t_1$. The generated abstract paths from the previous counterexample only have the accumulative probability of $0.36875$, which implies the counterexample paths have been removed and the newly generated abstract system $\mathcal{M}^z_1$ is shown in Fig. \ref{fig:M1}. 

By reusing the adversary found in the previous iteration, we can show $\mathcal{M}^z_1$ does not satisfy $\phi$ neither. The counterexample is returned as a set of four paths: $\{t_0\rightarrow t_3, t_0\rightarrow t_1\rightarrow t_3, t_0\rightarrow t_2\rightarrow t_3, t_0\rightarrow t_2\rightarrow t_2\rightarrow t_3\}$ with the accumulated probability of $0.50625$. This counterexample is also spurious. After finding $B_1=\{t_0,t_1,t_2\}$ that contains the over abstracted states, we choose the splitting policy that splits $s_{even}z_{even}$ out of $t_2$. And the generated abstract paths from the previous counterexample only have the accumulative probability around $0.3775$, which implies the counterexample paths have been removed and the newly generated abstract system $\mathcal{M}^z_2$ is shown in Fig. \ref{fig:M2}. 

Still with the adversary in Fig. \ref{fig:adv}, the dissatisfaction of $\mathcal{M}^z_2$ on $\phi$ is witnessed. This time the returned counterexample contains four paths: $\{t_0\rightarrow t_3, t_0\rightarrow t_1\rightarrow t_3, t_0\rightarrow t_1\rightarrow t_2\rightarrow t_3, t_0\rightarrow t_1\rightarrow t_4\rightarrow t_3\}$ with the accumulative probability around $0.4845$. It turns out that the realizable probability of the counterexample is larger than $0.45$, which means we have found the real counterexample showing that $\mathcal{P}\not\models\phi$.

\section{Conclusion and Future Work}
In this article, a CEGAR framework for POMDPs is proposed with the proof of the soundness and completeness. Inspired by strong simulation relation for probabilistic systems and CEGAR frameworks for MDPs, we define $0/1$-WA$^z$ as the form of the abstract system for POMDP and use safe simulation relation to preserving safe-PCTL with finite horizon. With $0/1$-WA$^z$ and safe simulation relation, the abstraction refinement algorithm is given to find a proper abstract system for POMDP based on counterexamples returned from model checking on the abstract systems iteratively. 

In the future, we will extend our results to other temporal logic specifications and relax the requirement on the specification to have a finite horizon. Meanwhile, we will completely implement the software package for our CEGAR framework.

\appendix\section*{Connection between model checking and optimal policy computation for $0/1$-WA$^z$}

%

In the optimal policy computation problem for POMDP $\mathcal{P}=\{S,\bar{s},A,Z,T,O\}$, a reward function $R(s,a):S\times A\rightarrow \mathbb{R}$ is defined to assign a numerical value quantifying the reward of performing an action $a$ at state $s$ \cite{pineau2006anytime}. The objective is to compute a policy (adversary) for selecting actions based on histories. The optimal policy $\sigma$ maximizes the expected future accumulative reward and usually the future rewards are discounted by a factor $\gamma: 0\leq \gamma <1$ to guarantee the accumulative reward is finite. Formally, given a history $h_t=a_0z_1,...,z_{t-1}a_{t-1}z_t$, the belief distribution $b(\cdot|h_t)$ over $S$ is the sufficient statistic \cite{astrom1965optimal}. Here $b(s|h_t)=Pr(s_t=s|h_t,b_0)$ stands for the conditional probability that the domain is in state $s$ at time $t$, given history $h_t$ and initial state distribution $b_0$. With the initial value function being defined as 
\begin{align*}
V_0(b|h)=\max_{a}\sum_{s\in S}R(s,a)b(s|h),
\end{align*}
the $t$th value function can be calculated from the ($t-1$)th following 
\begin{align*}
V_t(b|h)&=\max_{a}\left[\sum_{s\in S}R(s,a)b(s|h)\right.\\
&~~~~~~~~~~\left.+\gamma\sum_{z\in Z}Pr(z|a,b)V_{t-1}(\tau(b,a,z))\right],
\end{align*}
where $\tau()$ stands for the belief state update function following Bayesian rules \cite{pineau2006anytime}. For the bounded until safe-PCTL specification $\phi=P_{\unlhd p}[\phi_1~\mathcal{U}^{\leq k}\phi_2]$ considered in this article, one can convert the model checking problem to the optimal policy computation problem for POMDP by modifying the transition structure of POMDP to make states $s\models\neg\phi_1$ and states $s\models\phi_2$ absorbing, and designing the reward scheme that assigns 0 to intermediate transitions and 1 to the final transitions on $s\models\phi_2$ when depth $k$ is reached \cite{sharan2014formal,silver2010monte}. Since the future accumulative rewards are collected from the finite horizon, the discount factor is not needed. Under this reward scheme, the value function $V_k(b)$ stands for the accumulative probability of paths satisfying $\phi_1~\mathcal{U}^{\leq k}\phi_2$ with initial belief state $b$ under the optimal policy. Then the model checking problems can be answered by finding the optimal policy and checking whether $V_k(b_0)\unlhd p$ or not. This classic POMDP optimal policy computation problem that can be solved by, for example, value iteration method \cite{zhang2001speeding,pineau2006anytime}. Among different solvers for POMDP optimization problems, partially observable Monte-Carlo planning (POMCP) is a promising method based on Monte-Carlo simulation to find the optimal policy with convergence guarantee in large POMDPs \cite{silver2010monte,kocsis2006bandit}. 

In Section \ref{sec3}, we introduce $0/1$-WA$^z$ which is an extension of $0/1$-WA. With the requirement of the adversaries being observation-based, $0/1$-WA$^z$ can be seen as a special POMDP with the observation function for each state being Dirac delta function. For this special POMDP, the transition functions for some states satisfy $T(s,a,s')\in [0,1],~s,s'\in S,a\in A$ but $\sum_{s'\in S}T(s,a,s')$ can be larger than 1 while in the traditional POMDP $\sum_{s'\in S}T(s,a,s')$ must be equal (or less) to 1. We call this special kind of POMDP the weighted-POMDP. For weighted-POMDP, we can still design a reward scheme to connect its model checking problem with the optimal policy computation problem for the bounded until safe-PCTL specification $\phi=P_{\unlhd p}[\phi_1~\mathcal{U}^{\leq k}\phi_2]$. To illustrate this reward scheme, we notice that in the standard POMDP, the optimal policy computation problem can be understood as the planning problem on a derived MDP where the states of this MDP are the belief states in POMDP and the transition function is given by belief state updating function. Since the belief state corresponds to the history, the states of the derived MDP can also be understood as histories. Following this idea, we first present the construction of the derived MDP for weighted-POMDP then give the reward scheme for the derived MDP.   

For weighted-POMDP, the belief state distribution $b(\cdot|h_t)$ is no longer the sufficient statistic for a history $h_t$ because at each time instance the carried probability mass over $S$ is enlarged when $\sum_{s'\in S}T(s,a,s')>1$. To represent the total probability mass over $S$ for a history $h$, we define a coefficient function $C()$ that maps a history in weighted-POMDP to a positive real value. Formally, given initial state $\bar{s}$ we have $C(h_0)=1$ and 
\begin{align*}
b(s|h_0)=
\begin{cases}
1, s=\bar{s};\\
0,otherwise.
\end{cases}
\end{align*} 
The updates of the belief state $b$ and the coefficient function $C$ follow 
\begin{align*}
& b(s'|h_t)\\
&~~~~=\frac{\sum_{s\in S}O(s',z)T(s,a,s')b(s|h_{t-1})C(h_{t-1})}{\sum_{s''\in S}\sum_{s\in S} O(s'',z)T(s,a,s'')b(s|h_{t-1})C(h_{t-1})},\\
&C(h_t)=C(h_{t-1})\sum_{s\in S}\sum_{s'\in S}b(s|h_{t-1})T(s,a,s'), t=1,2,...
\end{align*}
Given a weighted-POMDP, its derived MDP $\mathcal{M}$ with histories as states is defined as $\mathcal{M}=\{H,A,T_M\}$ where $H$ is the state space of histories and 
\begin{align*}
& T_M(h,a,haz)\\~~~~&=\frac{\sum_{s''\in S}\sum_{s\in S} O(s'',z)T(s,a,s'')b(s|h)C(h)}{\sum_{z\in Z}\sum_{s''\in S}\sum_{s\in S} O(s'',z)T(s,a,s'')b(s|h)C(h)}\\
&~~~~=\frac{\sum_{s''\in S}\sum_{s\in S} O(s'',z)T(s,a,s'')b(s|h)C(h)}{C(haz)},
\end{align*}
where the second equality is established following the fact that $\sum_{z\in Z}O(s,z)=1$. Apparently $T_M(h,a,haz)$ is a standard transition function for MDP with \\$\sum_{z\in Z}T_M(h,a,haz)=1,\forall a\in A,~h\in H$.  

Given $\phi=P_{\unlhd p}[\phi_1~\mathcal{U}^{\leq k}\phi_2]$ as the specification for a weighted-POMDP, we still modify the transition structure to make states $s\models\neg\phi_1$ and states $s\models\phi_2$ absorbing. Instead of assigning reward for intermediate transitions, an immediate reward is assigned at the end for a history. Let $\Delta$ denote the set of $s\models \phi_2,s\in S$. Then the initial value function is defined as 
\begin{align*}
V_0(h)=C(h)\sum_{s\in\Delta}b(s|h)
\end{align*}
and the $t$th value function can be calculated from the ($t-1$)th:
\begin{align*}
V_t(h)=\max_{a}\sum_{z\in Z}T_M(h,a,haz)V_{t-1}(haz).
\end{align*}

\begin{theorem}
	$V_k(h)$ equals to the maximum accumulative probability of paths satisfying $\phi_1~\mathcal{U}^{\leq k}\phi_2$ with initial history $h$ following the optimal policy.
\end{theorem}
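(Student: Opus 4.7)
The plan is to proceed by induction on the horizon $k$, leveraging the absorbing modification that makes every state satisfying $\phi_2$ or $\neg\phi_1$ a sink. Under this modification, a trajectory satisfies $\phi_1~\mathcal{U}^{\leq k}\phi_2$ if and only if its concrete state lies in $\Delta=\{s\models\phi_2\}$ at some time $\leq k$, because absorption at $\Delta$ preserves the witness while absorption outside $\{s\models\phi_1\}$ permanently bars the trajectory from ever reaching $\Delta$. The target quantity in the theorem is therefore the un-normalized weighted probability, under the optimal observation-based adversary, that a trajectory from history $h$ enters $\Delta$ within $k$ steps.

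For the base case $k=0$, only the current concrete state matters. Starting from history $h$, the weighted probability mass concentrated on state $s$ equals $C(h)\cdot b(s|h)$ by construction of the $C$ update, and summing over $s\in\Delta$ reproduces $V_0(h)$ verbatim.

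For the inductive step I first establish the identity
\begin{align*}
T_M(h,a,haz)\cdot C(haz)=\sum_{s,s'\in S}O(s',z)T(s,a,s')b(s|h)C(h),
\end{align*}
which follows directly from the definitions of $T_M$ and the recursive update of $C$. Plugging this identity into $\sum_{z\in Z}T_M(h,a,haz)V_{t-1}(haz)$ and invoking the inductive hypothesis at each successor $haz$, the Bellman sum rewrites as the total un-normalized weighted probability, conditional on action $a$ at $h$ followed by optimal continuations, of trajectories that enter $\Delta$ within $t$ steps from $h$. Trajectories already in $\Delta$ at $h$ are carried through the absorbing self-loop so that their mass is propagated unchanged through every application of the recursion and never double-counted. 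Taking $\max_a$ realizes an observation-based adversary achieving $V_t(h)$; conversely, a standard dynamic-programming argument---decomposing any observation-based adversary into its first action at $h$ and its restrictions to the subtrees rooted at each $haz$, each optimized independently---shows that no such adversary can exceed $V_t(h)$.

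The principal obstacle, and the most delicate step, is the bookkeeping of the non-stochastic weights through $C$. Because $b(\cdot|h)$ is always renormalized while $\sum_{s'}T(s,a,s')$ may exceed one, the excess mass must be tracked separately; the definition $C(haz)=C(h)\sum_{s,s'}b(s|h)T(s,a,s')$ was crafted precisely so that $T_M$ is a genuine stochastic transition on the derived MDP $\mathcal{M}$ while the product $T_M(h,a,haz)\cdot C(haz)$ recovers the un-normalized weighted transition, which is what makes the induction close. A secondary issue, easy to dispatch, is that pure deterministic adversaries on $\mathcal{M}$ are in bijection with pure observation-based adversaries on the weighted-POMDP, so the $\max_a$ in the Bellman recursion ranges over the correct admissible class of policies.
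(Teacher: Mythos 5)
Your proof is correct and follows essentially the same route as the paper's: induction on the horizon, with the key identity $T_M(h,a,haz)\,C(haz)=\sum_{s,s'\in S}O(s',z)T(s,a,s')b(s|h)C(h)$ used to rewrite the Bellman sum as the un-normalized weighted reachability probability and close the induction. Your additional remarks on the adversary decomposition and the absorbing-state bookkeeping only make explicit steps the paper leaves implicit.
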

\begin{proof}
	We give an induction proof for this theorem. 
	
	For the base case of $k=0$, the theorem holds trivially with \begin{align*}
	V_0(h)=C(h)\sum_{s\in\Delta}b(s|h).
	\end{align*}
	
	Assume the theorem is true for $k=i,~i\geq 0$. Then we have 
	\begin{align*}
	V_i(h)= \sum_{s'\in S}b(s'|h)C(h)Pr\{s'z\xrightarrow{i}\Delta|h\},
	\end{align*}
	where $Pr\{s'z\xrightarrow{i}\Delta|h\}$ represents the maximum probability mass of reaching $\Delta$ in less or equal to $i$ steps from $s'$ conditional on history $h$ with $z$ as the latest observation.
	
	For $k=i+1$, the maximum accumulative path probability mass of reaching $\Delta$ in less or equal to $i+1$ steps for history $h$ can be written as 
	\begin{align*}
	Prob &= \max_a\sum_{z\in Z}\sum_{s'\in S}\sum_{s\in S} O(s',z)T(s,a,s')b(s|h)C(h) \\
	& ~~~~~~~~Pr\{s'z\xrightarrow{i}\Delta|haz\}\\
	&=\max_a\sum_{z\in Z} \sum_{s'\in S}b(s'|haz)T_M(h,a,haz)C(haz) \\& ~~~~~~~~Pr\{s'z\xrightarrow{i}\Delta|haz\}.
	\end{align*}
	
	Following the expression of $V_i(h)$, we can get
	\begin{align*}
	Prob=\max_a\sum_{z\in Z} T_M(h,a,haz)V_i(haz)=V_{i+1}(h).
	\end{align*}
	Thus $V_{i+1}(h)$ is the maximum probability mass of reaching $\Delta$ in less or equal to $i+1$ steps for the history $h$, which concludes the proof.
	\qed
\end{proof}

Then the model checking problems for the weighted-POMDP $\mathcal{P}$ on \\ $\phi=P_{\unlhd p}[\phi_1~\mathcal{U}^{\leq k}\phi_2]$ can be answered by finding the optimal policy that maximize $V_k(h_0)$ and checking whether $V_k(h_0)\unlhd p$ or not. This optimal policy computation problem can be solved using, for example, POMCP.  





\ifCLASSOPTIONcaptionsoff
  \newpage
\fi



\bibliographystyle{IEEEtran}
\bibliography{ref,ref1,ref3,ref2,Cyber-RelatedWork1}

\begin{thebibliography}{10}
\providecommand{\url}[1]{#1}
\csname url@samestyle\endcsname
\providecommand{\newblock}{\relax}
\providecommand{\bibinfo}[2]{#2}
\providecommand{\BIBentrySTDinterwordspacing}{\spaceskip=0pt\relax}
\providecommand{\BIBentryALTinterwordstretchfactor}{4}
\providecommand{\BIBentryALTinterwordspacing}{\spaceskip=\fontdimen2\font plus
\BIBentryALTinterwordstretchfactor\fontdimen3\font minus
  \fontdimen4\font\relax}
\providecommand{\BIBforeignlanguage}[2]{{%
\expandafter\ifx\csname l@#1\endcsname\relax
\typeout{** WARNING: IEEEtran.bst: No hyphenation pattern has been}%
\typeout{** loaded for the language `#1'. Using the pattern for}%
\typeout{** the default language instead.}%
\else
\language=\csname l@#1\endcsname
\fi
#2}}
\providecommand{\BIBdecl}{\relax}
\BIBdecl

\bibitem{ma2008modelling}
Z.~Ma and M.~Kwiatkowska, ``Modelling with prism of intelligent system,'' Ph.D.
  dissertation, Citeseer, 2008.

\bibitem{rutten2004mathematical}
J.~J. Rutten, M.~Kwiatkowska, G.~Norman, and D.~Parker, \emph{Mathematical
  techniques for analyzing concurrent and probabilistic systems}.\hskip 1em
  plus 0.5em minus 0.4em\relax American Mathematical Soc., 2004.

\bibitem{baier2008principles}
C.~Baier and J.~Katoen, \emph{Principles of Model Checking}.\hskip 1em plus
  0.5em minus 0.4em\relax MIT Press, 2008.

\bibitem{madani1999undecidability}
O.~Madani, S.~Hanks, and A.~Condon, ``On the undecidability of probabilistic
  planning and infinite-horizon partially observable markov decision
  problems,'' in \emph{AAAI/IAAI}, 1999, pp. 541--548.

\bibitem{goldsmith1998complexity}
J.~Goldsmith and M.~Mundhenk, ``Complexity issues in markov decision
  processes.'' in \emph{IEEE Conference on Computational Complexity}, 1998, pp.
  272--280.

\bibitem{chatterjee2016decidable}
K.~Chatterjee, M.~Chmel{\'\i}k, and M.~Tracol, ``What is decidable about
  partially observable markov decision processes with $\omega$-regular
  objectives,'' \emph{Journal of Computer and System Sciences}, vol.~82, no.~5,
  pp. 878--911, 2016.

\bibitem{hermanns2008probabilistic}
H.~Hermanns, B.~Wachter, and L.~Zhang, ``Probabilistic cegar,'' in
  \emph{International Conference on Computer Aided Verification}.\hskip 1em
  plus 0.5em minus 0.4em\relax Springer, 2008, pp. 162--175.

\bibitem{chadha2010counterexample}
R.~Chadha and M.~Viswanathan, ``A counterexample-guided abstraction-refinement
  framework for markov decision processes,'' \emph{ACM Transactions on
  Computational Logic (TOCL)}, vol.~12, no.~1, p.~1, 2010.

\bibitem{feng2010compositional}
L.~Feng, M.~Kwiatkowska, and D.~Parker, ``Compositional verification of
  probabilistic systems using learning,'' in \emph{Quantitative Evaluation of
  Systems (QEST), 2010 Seventh International Conference on the}.\hskip 1em plus
  0.5em minus 0.4em\relax IEEE, 2010, pp. 133--142.

\bibitem{kwiatkowska2010assume}
M.~Kwiatkowska, G.~Norman, D.~Parker, and H.~Qu, ``Assume-guarantee
  verification for probabilistic systems,'' in \emph{International Conference
  on Tools and Algorithms for the Construction and Analysis of Systems}.\hskip
  1em plus 0.5em minus 0.4em\relax Springer, 2010, pp. 23--37.

\bibitem{kwiatkowska2013compositional}
------, ``Compositional probabilistic verification through multi-objective
  model checking,'' \emph{Information and Computation}, vol. 232, pp. 38--65,
  2013.

\bibitem{he2016learning}
F.~He, X.~Gao, M.~Wang, B.-Y. Wang, and L.~Zhang, ``Learning weighted
  assumptions for compositional verification of markov decision processes,''
  \emph{ACM Transactions on Software Engineering and Methodology (TOSEM)},
  vol.~25, no.~3, p.~21, 2016.

\bibitem{clarke2000counterexample}
E.~Clarke, O.~Grumberg, S.~Jha, Y.~Lu, and H.~Veith, ``Counterexample-guided
  abstraction refinement,'' in \emph{International Conference on Computer Aided
  Verification}.\hskip 1em plus 0.5em minus 0.4em\relax Springer, 2000, pp.
  154--169.

\bibitem{komuravelli2012assume}
A.~Komuravelli, C.~S. P{\u{a}}s{\u{a}}reanu, and E.~M. Clarke,
  ``Assume-guarantee abstraction refinement for probabilistic systems,'' in
  \emph{International Conference on Computer Aided Verification}.\hskip 1em
  plus 0.5em minus 0.4em\relax Springer, 2012, pp. 310--326.

\bibitem{han2009counterexample}
T.~Han, J.-P. Katoen, and D.~Berteun, ``Counterexample generation in
  probabilistic model checking,'' \emph{IEEE Transactions on Software
  Engineering}, vol.~35, no.~2, pp. 241--257, 2009.

\bibitem{segala1995probabilistic}
R.~Segala and N.~Lynch, ``Probabilistic simulations for probabilistic
  processes,'' \emph{Nordic Journal of Computing}, vol.~2, no.~2, pp. 250--273,
  1995.

\bibitem{Henzinger:2004:AP:982962.964021}
\BIBentryALTinterwordspacing
T.~A. Henzinger, R.~Jhala, R.~Majumdar, and K.~L. McMillan, ``Abstractions from
  proofs,'' \emph{SIGPLAN Not.}, vol.~39, no.~1, pp. 232--244, Jan. 2004.
  [Online]. Available: \url{http://doi.acm.org/10.1145/982962.964021}
\BIBentrySTDinterwordspacing

\bibitem{han2007counterexamples}
T.~Han and J.-P. Katoen, ``Counterexamples in probabilistic model checking,''
  in \emph{International Conference on Tools and Algorithms for the
  Construction and Analysis of Systems}.\hskip 1em plus 0.5em minus 0.4em\relax
  Springer, 2007, pp. 72--86.

\bibitem{baier2000deciding}
C.~Baier, B.~Engelen, and M.~Majster-Cederbaum, ``Deciding bisimilarity and
  similarity for probabilistic processes,'' \emph{Journal of Computer and
  System Sciences}, vol.~60, no.~1, pp. 187--231, 2000.

\bibitem{feng2011learning}
L.~Feng, T.~Han, M.~Kwiatkowska, and D.~Parker, ``Learning-based compositional
  verification for synchronous probabilistic systems,'' in \emph{International
  Symposium on Automated Technology for Verification and Analysis}.\hskip 1em
  plus 0.5em minus 0.4em\relax Springer, 2011, pp. 511--521.

\bibitem{clarke1982design}
E.~M. Clarke and E.~A. Emerson, \emph{Design and synthesis of synchronization
  skeletons using branching time temporal logic}.\hskip 1em plus 0.5em minus
  0.4em\relax Springer, 1982.

\bibitem{kwiatkowska2007stochastic}
M.~Kwiatkowska, G.~Norman, and D.~Parker, ``Stochastic model checking,'' in
  \emph{International School on Formal Methods for the Design of Computer,
  Communication and Software Systems}.\hskip 1em plus 0.5em minus 0.4em\relax
  Springer, 2007, pp. 220--270.

\bibitem{zhang2015learning}
X.~Zhang, B.~Wu, and H.~Lin, ``Learning based supervisor synthesis of pomdp for
  pctl specifications,'' in \emph{2015 54th IEEE Conference on Decision and
  Control (CDC)}.\hskip 1em plus 0.5em minus 0.4em\relax IEEE, 2015, pp.
  7470--7475.

\bibitem{chatterjee2013survey}
K.~Chatterjee, L.~Doyen, and T.~A. Henzinger, ``A survey of partial-observation
  stochastic parity games,'' \emph{Formal Methods in System Design}, vol.~43,
  no.~2, pp. 268--284, 2013.

\bibitem{chatterjee2010randomness}
K.~Chatterjee, L.~Doyen, H.~Gimbert, and T.~A. Henzinger, ``Randomness for
  free,'' in \emph{International Symposium on Mathematical Foundations of
  Computer Science}.\hskip 1em plus 0.5em minus 0.4em\relax Springer, 2010, pp.
  246--257.

\bibitem{kwiatkowska2011prism}
M.~Kwiatkowska, G.~Norman, and D.~Parker, ``Prism 4.0: Verification of
  probabilistic real-time systems,'' in \emph{Computer aided
  verification}.\hskip 1em plus 0.5em minus 0.4em\relax Springer, 2011, pp.
  585--591.

\bibitem{dehnert2016probabilistic}
C.~Dehnert, S.~Junges, J.-P. Katoen, and M.~Volk, ``The probabilistic model
  checker storm,'' \emph{arXiv preprint arXiv:1610.08713}, 2016.

\bibitem{sharan2014formal}
R.~Sharan, ``Formal methods for control synthesis in partially observed
  environments: application to autonomous robotic manipulation,'' Ph.D.
  dissertation, California Institute of Technology, 2014.

\bibitem{zhang2001speeding}
N.~L. Zhang and W.~Zhang, ``Speeding up the convergence of value iteration in
  partially observable markov decision processes,'' \emph{Journal of Artificial
  Intelligence Research}, vol.~14, no.~1, pp. 29--51, 2001.

\bibitem{pineau2006anytime}
J.~Pineau, G.~Gordon, and S.~Thrun, ``Anytime point-based approximations for
  large pomdps,'' \emph{Journal of Artificial Intelligence Research}, pp.
  335--380, 2006.

\bibitem{jansen2012comics}
N.~Jansen, E.~{\'A}brah{\'a}m, M.~Volk, R.~Wimmer, J.-P. Katoen, and B.~Becker,
  ``The comics tool--computing minimal counterexamples for dtmcs,'' in
  \emph{International Symposium on Automated Technology for Verification and
  Analysis}.\hskip 1em plus 0.5em minus 0.4em\relax Springer, 2012, pp.
  349--353.

\bibitem{zhang2016assume}
X.~Zhang, B.~Wu, and H.~Lin, ``Assume-guarantee reasoning framework for
  mdp-pomdp,'' in \emph{Decision and Control (CDC), 2016 IEEE 55th Conference
  on}.\hskip 1em plus 0.5em minus 0.4em\relax IEEE, 2016, pp. 795--800.

\bibitem{astrom1965optimal}
K.~J. Astrom, ``Optimal control of markov processes with incomplete state
  information,'' \emph{Journal of mathematical analysis and applications},
  vol.~10, no.~1, pp. 174--205, 1965.

\bibitem{silver2010monte}
D.~Silver and J.~Veness, ``Monte-carlo planning in large pomdps,'' in
  \emph{Advances in neural information processing systems}, 2010, pp.
  2164--2172.

\bibitem{kocsis2006bandit}
L.~Kocsis and C.~Szepesv{\'a}ri, ``Bandit based monte-carlo planning,'' in
  \emph{European conference on machine learning}.\hskip 1em plus 0.5em minus
  0.4em\relax Springer, 2006, pp. 282--293.

\end{thebibliography}
%

%







\end{document}